\def\myQED{\mbox{\rule[0pt]{1.5ex}{1.5ex}}}
\DeclareMathOperator{\esssup}{esssup}
\DeclareMathOperator{\essinf}{essinf}
\newtheorem{thm}{Theorem}[section]
\newtheorem{prop}[thm]{Proposition}
\newtheorem{lem}[thm]{Lemma}
\newtheorem{rmk}[thm]{Remark}
\newcommand{\no}{\nonumber}
\def\ps@headings{%
\def\@oddhead{\mbox{}\scriptsize\rightmark \hfil \thepage}%
\def\@evenhead{\scriptsize\thepage \hfil \leftmark\mbox{}}%
\def\@oddfoot{}%
\def\@evenfoot{}}
\begin{document}
\title{Non-Bayesian Quickest Detection with Stochastic Sample Right Constraints}
\author{Jun Geng and Lifeng Lai \thanks{This work was supported by the National Science Foundation CAREER award under grant CCF-10-54338 and by the National Science Foundation under grant DMS-12-65663.} \\
 Department of Electrical and Computer Engineering\\
 Worcester Polytechnic Institute\\
 Worcester, MA 01609, USA \\ Email:\{jgeng, llai\}@wpi.edu}
\maketitle 



\begin{abstract}
In this paper, we study the design and analysis of optimal detection scheme for sensors that are deployed to monitor the change in the environment and are powered by the energy harvested from the environment. In this type of applications, detection delay is of paramount importance. We model this problem as quickest change detection problem with a stochastic energy constraint. In particular, a wireless sensor powered by renewable energy takes observations from a random sequence, whose distribution will change at a certain unknown time. Such a change implies events of interest. The energy in the sensor is consumed by taking observations and is replenished randomly. The sensor cannot take observations if there is no energy left in the battery. Our goal is to design a power allocation scheme and a detection strategy to minimize the worst case detection delay, which is the difference between the time when an alarm is raised and the time when the change occurs. Two types of average run length (ARL) constraint, namely an algorithm level ARL constraint and an system level ARL constraint, are considered. We propose a low complexity scheme in which the energy allocation rule is to spend energy to take observations as long as the battery is not empty and the detection scheme is the Cumulative Sum test. We show that this scheme is optimal for the formulation with the algorithm level ARL constraint and is asymptotically optimal for the formulations with the system level ARL constraint.
\end{abstract}

\begin{keywords}
cumulative sum test; energy harvesting sensor; non-Bayesian quickest detection; sequential detection.
\end{keywords}


\section{Introduction} \label{sec:intro}

Recently, the study of sensor networks powered by renewable energy harvested from the environment has attracted considerable attention~\cite{Mao:TAC:12,Ho:TSP:XX,Huang:JSAC:12,Yang:JSAC:11,Yang:JCN:12}. Compared with the sensor networks powered by batteries, the sensor networks powered by renewable energy have several unique features such as unlimited life span and high dependence on the environment etc. Optimal power management schemes for each individual sensor and scheduling protocols for the whole network have been developed to maximize utility functions of {\em communication related metrics} such as channel capacity, transmission delay or network throughput.
However, besides these communication related metrics, there are other {\em signal processing related performance metrics} that are also important for sensor networks targeted for certain applications. For example, if a sensor network is deployed to monitor the health of a bridge, then the detection delay between the time when a structural problem occurs and the time when an alarm is raised is of interest. As another example, if a sensor network is deployed for intruder detection, then the detection delay and the false alarm probability are of interest.

Until now, these alternative but important performance metrics have not been investigated for sensors powered renewable energy. In this paper, we focus on the design of optimal power management schemes for such wireless sensor networks when the detection delay is of interest. In particular, we focus on so called ``quickest detection'' problem. In the quickest detection problem, wireless sensors are deployed to quickly detect the change (these terms will be precisely defined in the sequel) in the environment. Such changes typically imply certain activities of interest. For example, in the bridge monitoring, a change may imply that a certain structural problem has occurred in the bridge. As the result, it is of paramount importance to minimize the detection delay after the presence of such a change, hence the name of quickest detection.
Besides this application, quickest detection also has many other potential applications, such as the quality control \cite{Roberts:Tech:66}, network intrusion detection \cite{Premkumar:INFOC:08}, cognitive radio \cite{Lai:GLOBE:08}, etc. We note that the detection delay in the change point detection problem refers to the delay between the time when a change occurs and the time when an alarm is raised. It is not the delay from time zero to the time when an alarm is raised, since we are interested in the change.

Non-Bayesian quickest detection is one of the most important formulations, which was first studied by G. Lordon \cite{Lorden:AmS:71} and M. Pollak \cite{Pollak:AnS:85}. Under the non-Bayesian setup, a sensor sequentially observes a random sequence $\{X_{k}, k=1,2,\ldots\}$ with a fixed but unknown change point $t$. Before the change point $t$, the sequence $X_{1}, \ldots, X_{t-1}$ are independent and identically distributed (i.i.d.) with probability density function (pdf) $f_0$, and after $t$, the sequence are i.i.d. with pdf $f_1$. Under an average run length (ARL) to false alarm constraint, namely the expected duration to a false alarm is at least $\gamma$,  Lorden's setup is to minimize the ``worst-worst case'' detection delay $\sup_{t} \esssup \mathbb{E}_{t}[(T-t+1)^{+}|X_{1},\ldots,X_{t-1}]$, where $T$ is the stopping time at which an alarm is raised, while Pollak's setup is to minimize the ``worst case'' conditional average detection delay $\sup_{t} \mathbb{E}_{t}[(T-t) | T \geq t]$. Since no prior information about the change point is required, these non-Bayesian setups are very attractive for practical applications.

In the above mentioned classic setups, there is no energy constraint and the sensor can take observations at every time slot. In this paper, we extend Lorden's and Pollak's problems to sensors that are powered by renewable energy. In this case, the energy stored in the sensor is replenished by a random process and consumed by taking 
observations. The sensor cannot take observations if there is no energy left. Hence, the sensor cannot take observations at every time instant anymore. The sensor needs to plan its use of power carefully. 
Moreover, the stochastic nature of the energy replenishing process will certainly affect the performance of change detection schemes. 
Since the energy collected by the harvester in each time instant is not a constant but a random variable, this brings new optimization challenges. 

We first consider the scenario in which a unit of energy arrives with probability $p$ at each time instant. For Lorden's setup, two types of ARL constraint are considered in this paper. The first type is an algorithm level ARL constraint, which puts a lower bound on the expected number of observations taken by the sensor before it runs a false alarm. The algorithm level ARL constraint is independent of the energy arriving probability $p$. Under this setup, we prove that the optimal detection procedure is the well known cumulative sum (CUSUM) procedure proposed in \cite{Lorden:AmS:71}, and the optimal power allocation scheme is to allocate the energy as soon as it is harvested. The second type ARL constraint is on a system level, which puts a lower bound on the expected duration to a false alarm. This constraint is related to the energy arriving probability $p$. In this case, we show that CUSUM procedure and the immediate power allocation strategy is asymptotically optimal when the system ARL goes to infinity. For Pollak's setup, we discuss the problem only with the system level ARL in detail. As we can see later, the immediate power allocation coupled with CUSUM detection is actually asymptotically optimal for both the system level ARL and the algorithm level ARL.

We then consider a more general energy arriving process in which more than one unit of energy can arrive at each time instant. In this scenario, we show that a simple energy allocation policy, in which the sensor takes samples as long as there is energy left at the battery, coupled with CUSUM test is asymptotically optimal for both Lorden and Pollak's setups when the system level ARL goes to infinity.

There have been some existing works on the quickest change point detection problem that take the sample cost into consideration. The first main line of existing work considers the problem under a Bayesian setup. The main difference between the Bayesian setup and non-Bayesian setup is that in the Bayesian setup, the change point is modeled as a random variable with a known distribution. No such assumption is made in the non-Bayesian setup. \cite{Premkumar:INFOC:08} considers the design of detection strategy that strikes a balance between the detection delay, false alarm probability and the number of sensors being active. In particular, \cite{Premkumar:INFOC:08} considers a wireless network with multiple sensors monitoring the Bayesian change in the environment. Based on the observations from sensors at each time slot, the fusion center decides how many sensors should be active in the next time slot to save energy.~\cite{Banerjee:SADA:12} take the average number of observations into consideration, and provides the optimal solution along with low-complexity but asymptotically optimal rules. In~\cite{Banerjee:TIT:12}, the authors propose a DE-CUSUM scheme for the non-Bayesian setup and show that it is asymptotically optimal.

The remainder of this paper is organized as follows. The mathematical model is given in Section  \ref{sec:model}. Section \ref{sec:internal} presents the optimal solution for Lorden's problem under the algorithm level ARL constraint and the performance analysis for the optimal solution. In Section \ref{sec:external}, we present asymptotically optimal solutions for Lorden's and Pollak's problems under the system level ARL constraint. Section~\ref{sec:extension} presents our results for a more general energy arriving model. Numerical examples are given in Section \ref{sec:simulation} to illustrate the results obtained in this work. Finally, Section \ref{sec:conclusion} offers concluding remarks.

\section{Problem Formulation} \label{sec:model}
Let $\left\{ X_{k}, k = 1, 2, \ldots \right\}$ be a sequence of random variables whose distribution changes at a fixed but unknown time $t$. Before $t$, the $\left\{ X_{k}\right\}$'s are i.i.d. with pdf $f_{0}$; after $t$, they are i.i.d. with pdf $f_{1}$. The pre-change pdf $f_{0}$ and post-change pdf $f_{1}$ are perfectly known by the sensor. We use $P_{t}$ and $\mathbb{E}_{t}$ to denote the probability measure and the expectation with the change happening at $t$, respectively, and use $P_{\infty}$ and $\mathbb{E}_{\infty}$ to denote the case $t=\infty$.

We assume that the energy arrives randomly at each time slot. To facilitate the presentation and set up notation, we present the model for the case when the energy arriving process is a Bernoulli process with parameter $p$ in this section. A more general model will be considered in Section~\ref{sec:extension}. Specifically, we use $\nu = \{\nu_{1}, \nu_{2}, \dots, \nu_{k}, \dots \}$ to denote the energy arriving process with $\nu_{k} \in \{0, 1\}$, in which $\nu_{k} = 1$ indicates that a unit of energy is collected by the energy harvester at time slot $k$ and $\nu_{k} = 0$ means that no energy is harvested. $\{\nu_{k} \}$ is i.i.d. over $k$. Moreover, we use $P^{\nu}$ to denote its probability measure (correspondingly, we use $\mathbb{E}^{\nu}$ to denote the expectation according to the measure $P^{\nu}$), and we have $P^{\nu}(\nu_{k} = 1) = p$.

The sensor can decide how to allocate these collected energies. Let $\mu = \{\mu_{1}, \mu_{2}, \dots, \mu_{k}, \dots \}$ be the power allocation strategy, where $\mu_{k} \in \{ 0, 1 \}$, in which $\mu_{k} = 1$ means that the wireless sensor spends a unit of energy on taking an observation at time slot $k$, while $\mu_{k} = 0$ means that no energy is spent at time $k$ and hence no observation is taken.

The sensor's battery has a finite capacity $C$. The energy arriving process and the energy utilizing process will affect the amount of energy left in the battery. We use $E_{k}$ to denote the energy left in the battery at the end of time slot $k$. $E_{k}$ evolves according to:
$$ E_{k} = \min[C, E_{k-1} + \nu_{k} - \mu_{k}]. $$

The energy allocation policy $\mu$ must obey the causality constraint, namely the energy cannot be used before it is harvested. The energy causality constraint can be written as
\begin{eqnarray}
E_{k} \geq 0 \quad k = 1,2,\ldots. \label{eq:causality1}
\end{eqnarray}
We use $\mathcal{U}$ to denote the set of all $\mu$'s that satisfy \eqref{eq:causality1}.

The sensor spends energy to take observation. The observation sequence is denoted as \\$\left\{ Z_{k}, k = 1, 2, \ldots \right\}$, where
\begin{eqnarray}
Z_{k}= \left\{ \begin{array}{ll}
X_{k} & \textrm{if  } \mu_{k}=1 \\
\phi & \textrm{if  } \mu_{k}=0
\end{array} \right. .          \label{eq:observation}
\end{eqnarray}
We call an observation $Z_k$ a non-trivial observation if $\mu_k=1$, i.e., if the observation is taken from the environment.

$\{Z_{k}\}$'s are not necessarily conditionally (conditioned on the change point) i.i.d. due to the existence of $\{\mu_{k}\}$. The distribution of $Z_{k}$ is related to both $\mu_{k}$ and $X_{k}$. Therefore, we use $P_{t}^{\mu}$ and $\mathbb{E}_{t}^{\mu}$ to denote the probability measure and expectation of the observation sequence $\{Z_{k}\}$ with the change happening at $t$, respectively.

In this paper, we want to find a stopping time $T$, at which the sensor will declare that a change has occurred, and a power allocation rule $\mu$ that jointly minimize the detection delay. Clearly, the power allocation strategy $\mu_k$ depends causally on the observation process, 
the energy arriving process 
and the energy utilization process: 
\begin{eqnarray}
\mu_{k} = g_{k}(\mathbf{Z}_{1}^{k-1}, \nu_{1}^{k}, \mu_{1}^{k-1}), \no
\end{eqnarray}
in which $\mathbf{Z}_{1}^{k-1}$ denotes the vector $[Z_{1},\ldots, Z_{k-1}]$, $\nu_{1}^{k}$ and $\mu_{1}^{k-1}$ are defined similarly, and $g_{k}$ is the power allocation function used at time slot $k$.

We consider three problem setups. The first one is Lorden's quickest detection problem with an  \emph{algorithm level} ARL constraint, which is formulated as
\begin{eqnarray}
\hspace{-12mm} &&\text{(P1)} \quad \quad \min_{\mu \in \mathcal{U}, T \in \mathcal{T}} d(\mu, T), \no \\
\hspace{-12mm} &&\quad \quad \quad \quad \text{s.t.  } \mathbb{E}_{\infty}[N] \geq \eta, \label{eq:prob1}
\end{eqnarray}
where $\mathcal{T}$ is the set of all stopping time with $\mathbb{E}_{t}^{\mu}[T] < \infty$, $N$ is the total number of non-trivial observations taken by the sensor before it claims that the change has happened and
\begin{eqnarray}
&& d(\mu, T) = \sup_{t \geq 1} d_{t}(\mu, T), \no \\
&& d_{t} (\mu, T) = \esssup \mathbb{E}_{t}^{\mu}\left[ (T-t+1)^{+}|\mathcal{F}_{t-1}\right], \label{eq:tdd}
\end{eqnarray}
where $\mathcal{F}_k$ is the set of all observations till time $k$, namely $\mathcal{F}_k = \{Z_1,\cdots,Z_k\}$.
In this case, we put a lower bound $\eta$ on the average number of observations taken before a false alarm is raised. The larger $\eta$ is, the less frequent a false alarm will be raised. Since this constraint is independent of the power allocation scheme $\mu$ and energy arriving sequence $\nu$, this problem setup is robust against the variation of the ambient environment.

The second problem considered in this paper is Lorden's quickest detection problem with a \emph{system level} ARL constraint, which is formulated as
\begin{eqnarray}
\hspace{-12mm} &&\text{(P2)} \quad \quad \min_{\mu \in \mathcal{U}, T \in \mathcal{T}} d(\mu, T), \no \\
\hspace{-12mm} &&\quad \quad \quad \quad \text{s.t.  } \mathbb{E}_{\infty}^{\mu}[T] \geq \gamma. \label{eq:prob2}
\end{eqnarray}
In this formulation, a lower bound is set on the expected duration to a false alarm. In contrast to the previous case, this constraint depends on the power allocation $\mu$, which is further related to the energy arriving probability $p$. Therefore, this setup is more sensitive to the environment. 

In some applications, Pollak's formulation is of interest since its delay metric is less conservative than that of Lorden's formulation. In our context, Pollak's formulation can be written as
\begin{eqnarray}
\hspace{-12mm} &&\text{(P3)} \quad \quad \min_{\mu \in \mathcal{U}, T \in \mathcal{T}} \sup_{t \geq 1} \mathbb{E}_{t}^{\mu}\left[ T-t | T \geq t \right], \no \\
\hspace{-12mm} &&\quad \quad \quad \quad \text{s.t.  } \mathbb{E}_{\infty}^{\mu}[T] \geq \gamma. \label{eq:prob3}
\end{eqnarray}
Even without the additional energy casuality constraint, the optimal solution for Pollak's formulation is still unknown. Therefore, in this paper, we discuss only the asymptotic solution for Pollak's formulation. In the sequel, we will see that the proposed asymptotically optimal solution under the system level ARL constraint is also asymptotically optimal under the algorithm level ARL constraint. Hence, in the paper, we discuss only the system level ARL constraint for Pollak's formulation in detail.

For an arbitrary realization of the power allocation scheme $\mu$, we will use the following notation throughout of the paper:
\begin{enumerate}
\item $\{a_{k}, k = 1, 2, \ldots \}$ to denote the time instants at which the energy harvester harvests a unit of energy, i.e., $\nu_{a_{k}} = 1$;
\item $\{b_{k}, k = 1, 2, \ldots \}$ to denote the time instants at which the sensor takes observations, i.e., $\mu_{b_{k}} = 1$;
\item $\left\{X_{k}^{(a_{k},b_{k})}, k = 1, 2, \ldots \right\}$ or $\left\{ \tilde{X}_{k}, k = 1, 2, \ldots \right\}$ to denote the non-trivial observation sequence, which is the subsequence of $\{Z_{k}, k = 1, 2, \ldots\}$ with all its non-trivial elements. In particular, $X_{k}^{(a_{k},b_{k})}$ will be used when we want to emphasize the sampling time. Here $X_{k}^{(a_{k},b_{k})}$ is the $k^{th}$ non-trivial observation taken by the sensor at time $b_{k}$ using the energy arriving at time $a_{k}$.
\end{enumerate}
Using above notation, the energy causality constraint indicates the following inequality:
\begin{eqnarray}
b_{k} \geq a_{k}, \quad k=1, 2, \ldots. \label{eq:causality2}
\end{eqnarray}

An example of the realization of the sensor sampling procedure (and corresponding notation) is shown in Figure \ref{fig:model}.
\begin{figure}[thb]
\centering
\includegraphics[width=0.45 \textwidth]{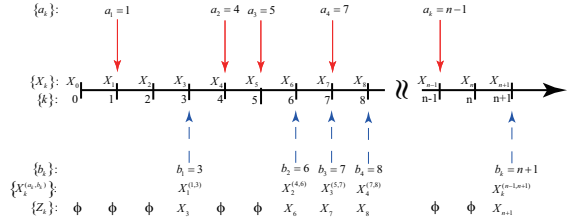}
\caption{An example of the realization of the sampling procedure}
\label{fig:model}
\end{figure}

\section{Optimal solution for Lorden's formulation with the algorithm level ARL constraint}  \label{sec:internal}
In this section, we study the optimal solution for (P1). We use $L(\cdot)$ to denote the likelihood ratio (LR), and use $l(\cdot) = \log L(\cdot)$ to denote the log likelihood ratio (LLR). For the observation sequence $\{Z_{k}\}$, LR is defined as
\begin{eqnarray}
L(Z_{k}) = \left\{ \begin{array}{c c} \frac{f_{1}(Z_{k})}{f_{0}(Z_{k})}, &  \text{if } \mu_{k} = 1 \\
1, & \text{if } \mu_{k} = 0  \end{array} . \right. \label{eq:LR}
\end{eqnarray}
The CUSUM statistic and Page's stopping time can be written as \cite{Lorden:AmS:71}
$$ S_{k} = \max_{1 \geq q \geq k} \left[ \prod_{i=q}^{k}L(Z_{i})\right] = \max[S_{k-1}, 1]L(Z_{k}),$$
and
$$T_{p} = \inf \{ k \geq 0 | S_{k} \geq B\},$$
respectively.

Generally, for a given detection strategy pair $(\mu, T)$, the detection delay $d_{t}(\mu, T)$ in \eqref{eq:tdd} varies from different change point $t$. If there is an equalizer strategy which makes $d_{t}(\mu, T)$ be a constant over $t$, it might be a good candidate for the optimal strategy for the minmax problem. Similar to the conclusion that Page's stopping time is an equalizer rule for the classical Lorden's problem \cite{Poor:Book:08}, we have following proposition:
\begin{prop} \label{prop:equilazer}
The power allocation scheme $\mu^{*} = \nu$ and Page's stopping time $T_{p}$ together achieve an equalizer rule, i.e., $d_{t}(\mu^{*}, T_{p}) = d_{1}( \mu^{*}, T_{p}), \forall t \geq 1$.
\end{prop}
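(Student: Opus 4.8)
The plan is to adapt the classical argument that Page's stopping time equalizes Lorden's delay, keeping careful track of the energy arrivals. The case $t=1$ is trivial, so fix $t\geq2$. Under $\mu^{*}=\nu$ the observation $Z_{k}$ equals $X_{k}$ when $\nu_{k}=1$ and $\phi$ otherwise, and the CUSUM recursion reads $S_{k}=\max[S_{k-1},1]L(Z_{k})$ with $L(\phi)=1$ and $S_{0}=0$. First I would observe that the event $\{T_{p}\geq t\}=\{S_{0}<B,\dots,S_{t-1}<B\}$ is $\mathcal{F}_{t-1}$-measurable, and that, conditioned on $\mathcal{F}_{t-1}$ and on $\{T_{p}\geq t\}$, the quantity $(T_{p}-t+1)^{+}=T_{p}-t+1$ is a functional of only three things: the current level $S_{t-1}$, the post-change samples $X_{t},X_{t+1},\dots$ (i.i.d.\ with pdf $f_{1}$), and the future energy arrivals $\nu_{t},\nu_{t+1},\dots$ (i.i.d.\ Bernoulli($p$)). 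Since the last two sequences are independent of $\mathcal{F}_{t-1}$, this yields
\begin{eqnarray}
\mathbb{E}_{t}^{\mu^{*}}\left[(T_{p}-t+1)^{+}|\mathcal{F}_{t-1}\right]=h(S_{t-1})\,\mathbf{1}_{\{T_{p}\geq t\}}, \no
\end{eqnarray}
where $h(s)$ denotes the expected delay of a ``freshly started'' CUSUM run whose statistic starts at level $s<B$ and whose inputs are fresh copies of the post-change samples and of the energy process.

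Next I would establish two structural facts about $h$. A pathwise coupling on a common probability space carrying one copy of the post-change samples and of $\{\nu_{k}\}$ shows that the map $s\mapsto\max[s,1]L(\cdot)$ is monotone non-decreasing, so a CUSUM statistic started from $s$ stays pointwise below one started from $s'\geq s$ at all later times; hence the former reaches $B$ no sooner, and $h$ is non-increasing. Moreover, for every $s\in[0,1]$ one has $\max[s,1]=1$, so the first post-change update $S_{t}=L(Z_{t})$, and therefore the entire subsequent path, does not depend on $s$; thus $h$ is constant on $[0,1]$, and in particular $h(0)=h(1)=\mathbb{E}_{1}^{\mu^{*}}[T_{p}]=d_{1}(\mu^{*},T_{p})$, the last equalities using that $\mathcal{F}_{0}$ is trivial, $S_{0}=0<B$, and $(T_{p})^{+}=T_{p}$.

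These facts combine to give the equalizer property. Because $S_{t-1}\geq0$ and $h$ is non-increasing, $d_{t}(\mu^{*},T_{p})=\esssup h(S_{t-1})\mathbf{1}_{\{T_{p}\geq t\}}\leq h(0)=d_{1}(\mu^{*},T_{p})$. For the reverse inequality I would exhibit a positive-probability event on which this bound is attained: on $\{\nu_{1}=\dots=\nu_{t-1}=0\}$, which has probability $(1-p)^{t-1}>0$, no observation is taken before time $t$, so $S_{1}=\dots=S_{t-1}=1<B$, the run has not stopped, and hence $h(S_{t-1})\mathbf{1}_{\{T_{p}\geq t\}}=h(1)=d_{1}(\mu^{*},T_{p})$ on this event; therefore $d_{t}(\mu^{*},T_{p})\geq d_{1}(\mu^{*},T_{p})$. (When $p=1$, i.e.\ the classical case, one instead uses the positive-probability event that the first $t-1$ likelihood ratios are each at most one, which is nonempty whenever $f_{0}\neq f_{1}$.) This closes the argument.

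I expect the main obstacle to be making the first step fully rigorous: the assertion that, given $\mathcal{F}_{t-1}$, the conditional post-change delay depends on the observation history only through the scalar $S_{t-1}$. This is precisely where the energy model enters, and it works cleanly because $\mu^{*}=\nu$ is a ``blind'' allocation that never reacts to the observations, so the future arrivals $\nu_{t},\nu_{t+1},\dots$ remain independent of $\mathcal{F}_{t-1}$; one also has to handle the truncation $(\cdot)^{+}$ and the possibility $T_{p}\leq t-1$, which is exactly what the indicator $\mathbf{1}_{\{T_{p}\geq t\}}$ takes care of. The monotone-coupling step for $h$ is routine.
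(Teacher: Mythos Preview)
Your proposal is correct and follows essentially the same approach as the paper's proof. Both arguments rest on the same three ingredients: (i) under $\mu^{*}=\nu$ the post-$t$ delay depends on $\mathcal{F}_{t-1}$ only through the CUSUM level $S_{t-1}$ (equivalently, $W_{t-1}=\max[S_{t-1},1]$ is a homogeneous Markov chain); (ii) the delay is monotone non-increasing in this level; and (iii) the essential supremum is attained at $W_{t-1}=1$, which coincides with the initial state $W_{0}=1$. The paper states these points tersely via the Markov-chain language, while you spell them out more explicitly---using a pathwise coupling for monotonicity and exhibiting the concrete positive-probability event $\{\nu_{1}=\dots=\nu_{t-1}=0\}$ to show the $\esssup$ is achieved---but the skeleton is identical.
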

\begin{proof}
Since $\mu^{*} = \nu$ indicates that $\left\{ \mu_{k}^{*} \right\}$'s are i.i.d. over $k$, $\{Z_{k}\}$'s are conditionally i.i.d. given the change point $t$.

Notice that $W_{k} = \max[S_{k}, 1]$ is a non-decreasing function of $S_{k}$, and on the event $\{T_{p} \geq t\}$, $T_{p}$ is a non-increasing function of $W_{t-1}$. Then we have
\begin{eqnarray}
d_{t}(\mu^{*}, T_{p}) &=& \esssup \mathbb{E}_{t}^{\mu^{*}}\left[ T_{p} - t+1 | \mathcal{F}_{t-1}\right] \no \\
&=& \mathbb{E}_{t}^{\mu^{*}}\left[ T_{p} - t+1 | W_{t-1} = 1\right].
\end{eqnarray}
Since $\{W_{k}\}$ is a homogeneous Markov chain under the power allocation scheme $\mu_{k}^{*} = \nu_{k}$, then, $d_{t}(\mu^{*}, T_{p})=d_{1}(\mu^{*}, T_{p})$.
\end{proof}
\begin{rmk}
$\mu^{*} = \nu$ indicates $\mu_{k}^{*} = \nu_{k}$ for every $k$, that is, the sensor spends the energy taking observation immediately when it obtains an energy from the environment. Therefore, we call $\mu^{*}$ the \emph{immediate power allocation scheme} in the sequel.
\end{rmk}

The next lemma shows that the immediate power allocation scheme along with the CUSUM detection scheme is optimal for (P1).
\begin{lem}\label{lem:inner}
The optimal power allocation strategy for (P1) is $\mu^{*}$, and the optimal stopping time is $T_{p}$ with the threshold $B$ being a constant such that $\mathbb{E}_{\infty}[N] = \eta$.
\end{lem}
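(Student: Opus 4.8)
The plan is to recognize (P1) as the classical Lorden problem in disguise, run on the subsequence $\{\tilde X_{k}\}$ of non-trivial observations, but with real time inflated by the mean inter-arrival time $1/p$ of the Bernoulli energy process; optimality of CUSUM then reduces to Lorden's theorem~\cite{Lorden:AmS:71} together with the equalizer identity of Proposition~\ref{prop:equilazer}. I would first note that a threshold $B$ with $\mathbb{E}_{\infty}[N]=\eta$ exists: under $\mu^{*}=\nu$ every harvested unit is spent at once, so $E_{k}\equiv E_{0}$, nothing overflows, and $\{\tilde X_{k}\}$ is an i.i.d.\ $f_{0}$ sequence under $P_{\infty}$ on which $T_{p}$ is the ordinary CUSUM; hence $B\mapsto\mathbb{E}_{\infty}[N]$ is the classical ARL of CUSUM, continuous and strictly increasing from $0$ to $\infty$, so an admissible $B$ is available. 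Fixing this $B$, Proposition~\ref{prop:equilazer} gives $d(\mu^{*},T_{p})=d_{1}(\mu^{*},T_{p})=\mathbb{E}_{1}^{\mu^{*}}[T_{p}]$ since $\mathcal{F}_{0}$ is trivial. Writing $a_{1}<a_{2}<\cdots$ for the energy-arrival instants and $\tau$ for the CUSUM index at which $S$ first reaches $B$, one has $T_{p}=a_{\tau}$; under $P_{1}$ all observations are post-change, so $\tau$ depends only on the $X$-values and is independent of the gaps $a_{k}-a_{k-1}$, which are i.i.d.\ Geometric$(p)$, and Wald's identity yields $d(\mu^{*},T_{p})=\tfrac1p\,\mathbb{E}_{1}[\tau]=\tfrac1p\,\bar d_{L}(\eta)$, where $\bar d_{L}(\eta)$ denotes the optimal worst-case Lorden delay under ARL constraint $\eta$, attained classically by this very CUSUM.

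It remains to show $d(\mu,T)\ge\tfrac1p\,\bar d_{L}(\eta)$ for every feasible $(\mu,T)$. The first move is an interchange argument showing it is never advantageous to withhold energy: starting from the earliest slot at which $\mu$ leaves an available unit unused, one couples the energy streams and shifts that unit's use earlier, producing an allocation whose $k$-th non-trivial observation occurs no later, whose induced law of $N$ (hence the constraint) is unchanged, and for which a coupled stopping time has no larger delay; iterating and passing to the limit lets us take $\mu=\mu^{*}$. With $\mu=\mu^{*}$, idling after the last sample only inflates the delay, and the dependence of the stopping decision on the pattern of $\phi$'s is external randomization that cannot help — the arrival process is independent of the change — so we may take $T=a_{\tau}$ for a (possibly randomized) stopping rule $\tau$ adapted to $\sigma(\tilde X_{1},\dots,\tilde X_{n})$ with $\mathbb{E}_{\infty}[\tau]\ge\eta$. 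Now fix $t$; conditioning on $\mathcal{F}_{t-1}$ determines the number $m$ of non-trivial observations taken in $[1,t-1]$ (all i.i.d.\ $f_{0}$), forces $\tilde X_{m+1},\tilde X_{m+2},\dots$ to be i.i.d.\ $f_{1}$, and makes $(T-t+1)^{+}$ equal, on $\{\tau>m\}$, to a sum of $\tau-m$ i.i.d.\ Geometric$(p)$ waiting times that are independent of $\mathcal{F}_{t-1}$ and of $\tau$; Wald's identity then gives $\mathbb{E}_{t}^{\mu^{*}}[(T-t+1)^{+}\mid\mathcal{F}_{t-1}]=\tfrac1p\,\mathbb{E}_{m+1}[(\tau-m)^{+}\mid\tilde X_{1}^{\,m}]$. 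Taking the essential supremum over $\tilde X_{1}^{\,m}$ and then the supremum over $t$ — which ranges $m$ over all of $\{0,1,2,\dots\}$ — gives $d(\mu^{*},T)\ge\tfrac1p\sup_{n\ge1}\esssup\mathbb{E}_{n}[(\tau-n+1)^{+}\mid\tilde X_{1}^{\,n-1}]$, and Lorden's theorem~\cite{Lorden:AmS:71}, valid for randomized rules, bounds the right side below by $\tfrac1p\,\bar d_{L}(\eta)$. Combining this with the value computed in the first paragraph proves that $(\mu^{*},T_{p})$ is optimal.

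The main obstacle is exactly the reduction in the second paragraph: passing from an arbitrary causal allocation $\mu$ — whose sampling decisions may read the entire energy record and all observed values — to the immediate allocation $\mu^{*}$, and from an arbitrary observation-driven stopping time to one acting only at sampling instants and only through $\{\tilde X_{k}\}$. Making this precise calls for an explicit coupling of the Bernoulli energy streams that respects both the causality constraint $b_{k}\ge a_{k}$ and the finite capacity $C$, so that neither feasibility nor delay is improved by the substitution, together with a sufficiency argument that randomizing over the arrival pattern is useless for detecting a change in the $X$'s. Once the problem is in this reduced form, the remainder is the Wald bookkeeping above plus a direct appeal to the classical theorem.
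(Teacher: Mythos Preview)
Your argument is sound in outline and shares the paper's essential ingredients---the equalizer identity, Wald on the geometric inter-arrival gaps, and a reduction to classical Lorden on the non-trivial subsequence $\{\tilde X_k\}$---but you invert the order of the two optimizations, and that inversion is exactly what generates the obstacle you flag in your final paragraph. The paper first fixes an \emph{arbitrary} $\mu\in\mathcal{U}$ and argues, via the quasi change point $n=\inf\{k:b_k\geq t\}$, that the optimal stopping rule on $\{Z_k\}$ is Page's $T_p$, i.e., CUSUM on $\{\tilde X_k\}$ with threshold $B$ determined solely by $\mathbb{E}_\infty[N]=\eta$; crucially, since CUSUM reads only the values $\tilde X_1,\tilde X_2,\dots$ and not their timestamps, this $B$ and the resulting firing index $N$ are the \emph{same} for every admissible $\mu$. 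Only then does the paper optimize over $\mu$, and the payoff of this order is dramatic: with $N$ now a fixed functional of $\{\tilde X_k\}$ independent of the allocation, the comparison of allocations collapses to the one-line pathwise inequality
\[
\mathbb{E}_1^{\mu}[T_p]=\mathbb{E}_1^{\mu}[b_N]\ \geq\ \mathbb{E}_1^{\nu}[a_N]=\mathbb{E}_1^{\mu^*}[T_p],
\]
which is nothing but the causality constraint $b_k\geq a_k$ of~\eqref{eq:causality2} read at the random index $k=N$. Combined with $d(\mu,T)\geq d_1(\mu,T)$ and the equalizer identity $d(\mu^*,T_p)=d_1(\mu^*,T_p)$, this finishes the proof.

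So your interchange/coupling argument is simply unnecessary: the paper never moves a unit of energy or tracks two coupled sample paths, because once the stopping rule has been pinned to CUSUM the only thing that varies with $\mu$ is the timestamp $b_N$ of a fixed index $N$, and causality handles that directly. Your ``sufficiency'' concern (that the stopping decision need not depend on the arrival pattern) is the same issue the paper absorbs into its first step, and your Wald bookkeeping is correct and indeed more explicit than the paper's; but the detour through coupling is harder than it needs to be, and you yourself identify it as the piece left incomplete.
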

\begin{proof}
The proof consists of two steps. The first step is to show that for an arbitrary but given power allocation $\mu$, $T_{p}$ is the optimal stopping time. The second step is to show that under $T_{p}$, $\mu^{*}$ is the optimal power allocation scheme. A detailed proof is provided in Appendix \ref{apd:inner}.
\end{proof}


In the following, we analyze the performance of $(\mu^{*}, T_{p})$ 
by determining the detection delay and the algorithm level ARL. Since $\{ Z_{k} \}$ is a conditionally i.i.d. sequence under $\mu^{*}$, we can apply Wald's lemma~\cite{Poor:Book:08} in our analysis. We have the following proposition:

\begin{prop} \label{prop:inner_performance}
Suppose $B>1$, then
\begin{eqnarray}
\mathbb{E}_{\infty}[N] &=& \frac{\mathbb{E}_{\infty}[\kappa]}{1-P_{\infty}(F_{0})}, \label{eq:far}\\
d(\mu^{*}, T_{p}) &=& \frac{1}{p} \frac{\mathbb{E}_{1}[\kappa]}{1-P_{1}(F_{0})}, \label{eq:add}
\end{eqnarray}
where $\kappa$ is the stopping time
\begin{eqnarray}
\kappa &=& \min \left\{ m \geq 1 \Bigg| \sum_{k=1}^{m} l\left(\tilde{X}_{k}\right) \not \in (0, \log B) \right\}, \no
\end{eqnarray}
and $F_{0}$ denotes the event
\begin{eqnarray}
\left\{ \sum_{k=1}^{m} l\left(\tilde{X}_{k}\right) \leq 0 \right\}. \no
\end{eqnarray}
\end{prop}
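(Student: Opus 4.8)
The plan is to exploit the regenerative (renewal) structure of the CUSUM recursion $S_{k}=\max[S_{k-1},1]L(Z_{k})$ together with Wald's identity, and to use Proposition~\ref{prop:equilazer} to reduce the delay computation to the case $t=1$.

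First, for \eqref{eq:far}: since $\mu^{*}=\nu$, the non-trivial observations $\{\tilde X_{k}\}$ are i.i.d.\ with density $f_{0}$ under $P_{\infty}$, and the CUSUM statistic changes only at non-trivial observation epochs. I would decompose the run of the test, indexed by the number of non-trivial observations, into i.i.d.\ cycles: a cycle begins whenever the log-statistic is at $0$ (equivalently $S\le 1$), during which the LLR random walk $\sum_{k} l(\tilde X_{k})$ runs until it first exits the interval $(0,\log B)$; if it exits at or below $0$ (the event $F_{0}$) the statistic regenerates to $1$ and a new cycle starts, whereas if it exits at or above $\log B$ the test stops. By construction a cycle has length equal in distribution to $\kappa$, and since successive cycles are independent, the number $M$ of cycles is geometric with success probability $1-P_{\infty}(F_{0})$, whence $\mathbb{E}_{\infty}[M]=1/(1-P_{\infty}(F_{0}))$. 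As $N=\sum_{i=1}^{M}\kappa_{i}$ with $\{\kappa_{i}\}$ i.i.d.\ copies of $\kappa$ and $M$ a stopping time for that sequence, Wald's identity gives $\mathbb{E}_{\infty}[N]=\mathbb{E}_{\infty}[M]\,\mathbb{E}_{\infty}[\kappa]$, which is \eqref{eq:far}. Finiteness of $\mathbb{E}_{\infty}[\kappa]$, hence of $\mathbb{E}_{\infty}[N]$, follows from the strictly negative drift $\mathbb{E}_{\infty}[l(\tilde X_{1})]=-D(f_{0}\|f_{1})<0$ of the walk, and $P_{\infty}(F_{0})<1$ because the upper boundary $\log B$ is finite.

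Next, for \eqref{eq:add}: by Proposition~\ref{prop:equilazer}, $d(\mu^{*},T_{p})=d_{1}(\mu^{*},T_{p})=\mathbb{E}_{1}^{\mu^{*}}[T_{p}]$, since $\mathcal{F}_{0}$ is trivial and $(T_{p}-1+1)^{+}=T_{p}$. Under $\mu^{*}=\nu$ a non-trivial observation is taken exactly when a unit of energy arrives, and at a trivial epoch $S_{k}=\max[S_{k-1},1]<B$ whenever $S_{k-1}<B$ because $B>1$; hence the alarm is necessarily raised at an energy-arrival epoch, so $T_{p}=a_{N}=\sum_{j=1}^{N}G_{j}$, where the $G_{j}$ are i.i.d.\ Geometric$(p)$ interarrival times with $\mathbb{E}[G_{j}]=1/p$. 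Because the energy process $\nu$ is independent of $\{X_{k}\}$ and, under $P_{1}$, all $X_{k}$ (hence all $\tilde X_{k}$) are i.i.d.\ with density $f_{1}$, the stopping index $N$ is a function of $\{\tilde X_{k}\}$ alone and is independent of $\{G_{j}\}$; Wald's identity then gives $\mathbb{E}_{1}^{\mu^{*}}[T_{p}]=\mathbb{E}_{1}[N]/p$. Applying the cycle argument of the previous paragraph under $P_{1}$ — where the positive drift $\mathbb{E}_{1}[l(\tilde X_{1})]=D(f_{1}\|f_{0})>0$ again ensures $\mathbb{E}_{1}[\kappa]<\infty$ and $P_{1}(F_{0})<1$ — yields $\mathbb{E}_{1}[N]=\mathbb{E}_{1}[\kappa]/(1-P_{1}(F_{0}))$, and combining the two identities proves \eqref{eq:add}.

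The step I expect to be the main obstacle is making the regenerative decomposition fully rigorous: one must verify that the CUSUM statistic genuinely restarts from the value $1$ after each down-crossing, so that the cycle lengths really are i.i.d.\ copies of $\kappa$ and independent across cycles; that $M$ is a stopping time with respect to the cycle filtration so that Wald's identity applies to $N=\sum_{i=1}^{M}\kappa_{i}$; and, for \eqref{eq:add}, that the alarm can only be raised at an energy-arrival epoch, which is precisely what licenses $T_{p}=a_{N}$ and the second application of Wald to the interarrival sequence. Note that the independence of $\{\tilde X_{k}\}$ from the arrival times genuinely requires $t\in\{1,\infty\}$ (for intermediate $t$ the law of $\tilde X_{k}$ depends on whether $a_{k}<t$), so the reduction to $t=1$ via the equalizer property of Proposition~\ref{prop:equilazer} is essential rather than cosmetic. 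The integrability side conditions should be dispatched quickly from the nonzero, finite Kullback--Leibler drifts of the LLR walk under each measure.
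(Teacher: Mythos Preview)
Your proposal is correct and follows essentially the same approach as the paper: a renewal decomposition of the CUSUM into i.i.d.\ exit-time cycles of length $\kappa$, with a geometric number of cycles, combined with Wald's identity, and then the equalizer property plus a second application of Wald to the geometric interarrival times to pass from $\mathbb{E}_{1}[N]$ to $\mathbb{E}_{1}^{\mu^{*}}[T_{p}]$. Your write-up is in fact more careful than the paper's own appendix on the side conditions (regeneration at $1$, the alarm firing only at arrival epochs because $B>1$, and why the reduction to $t=1$ is genuinely needed for independence of $\{\tilde X_{k}\}$ from the arrival process).
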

\begin{proof}
The proof follows closely that of Theorem 6.2 in \cite{Poor:Book:08}. A detailed proof is given in Appendix \ref{apd:inner_performance}.
\end{proof}
%

We note that in Proposition \ref{prop:inner_performance}, ARL and $d(\mu^{*},T_h^c)$ are given as functions of $P_{\infty}(F_0)$ and $P_{1}(F_0)$, whose precise values are difficult to evaluate. The following result, which is an extension of Lorden's asymptotical result~\cite{Lorden:AmS:71}, shows $d(\mu^{*},T_h^c)$ scales linear with $\log \eta$ when $\eta \rightarrow \infty$.
\begin{prop} \label{prop:inner_asym}
As $\eta \rightarrow \infty$, we have
\begin{eqnarray}
d(\mu^{*}, T_{p}) \sim \frac{1}{p} \frac{|\log \eta|}{I},
\end{eqnarray}
in which $I=I(f_{1}, f_{0})$ is the KL divergence of $f_{1}$ and $f_{0}$.
\end{prop}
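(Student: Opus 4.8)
The plan is to read the asymptotics straight off the closed forms in Proposition~\ref{prop:inner_performance}. By Lemma~\ref{lem:inner} the threshold $B=B(\eta)$ is fixed by $\mathbb{E}_{\infty}[N]=\eta$, and everything reduces to first–passage/renewal estimates for the random walk $R_{m}:=\sum_{k=1}^{m}l(\tilde X_{k})$. Under $\mu^{*}$ the non-trivial observations $\{\tilde X_{k}\}$ are conditionally i.i.d., so $R_{m}$ is an ordinary random walk with increment mean $\mathbb{E}_{1}[l(\tilde X_{1})]=I>0$ under $P_{1}$ and $\mathbb{E}_{\infty}[l(\tilde X_{1})]=-I(f_{0},f_{1})<0$ under $P_{\infty}$; moreover $\mathbb{E}_{\infty}[e^{l(\tilde X_{1})}]=\int f_{1}=1$, so $\{e^{R_{m}}\}$ is a nonnegative $P_{\infty}$-martingale and the one–step change of measure $dP_{1}=e^{l(\tilde X_{1})}\,dP_{\infty}$ holds coordinatewise. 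With $\kappa_{\infty}:=\inf\{m:R_{m}\le0\}$ and $\tau_{B}:=\inf\{m:R_{m}\ge\log B\}$ we have $\kappa=\kappa_{\infty}\wedge\tau_{B}$, $F_{0}=\{\kappa_{\infty}\le\tau_{B}\}$, $F_{0}^{c}=\{\tau_{B}<\kappa_{\infty}\}$ (the two are disjoint for $B>1$). Since $\tau_{B}\uparrow\infty$, we also have $\kappa\uparrow\kappa_{\infty}$ and $\{F_{0}^{c}\}\downarrow\{\kappa_{\infty}=\infty\}$ as $B\to\infty$. At a high level this is just Lorden's classical CUSUM analysis run in the ``$\tilde X$-index'' domain, with the extra factor $1/p$ coming from the conversion between real time and number of non-trivial observations.

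First I would pin down the post-change quantity in \eqref{eq:add}. Wald's identity for $\kappa$ under $P_{1}$ gives $I\,\mathbb{E}_{1}[\kappa]=\mathbb{E}_{1}[R_{\kappa}]$. On $F_{0}^{c}$, $\log B\le R_{\kappa}\le\log B+\varrho$ with overshoot $\varrho:=R_{\tau_{B}}-\log B\ge0$, whose mean is bounded uniformly in $B$ by nonlinear renewal theory (Lorden's inequality, under a mild second-moment condition on $l(\tilde X_{1})^{+}$); on $F_{0}$, $l(\tilde X_{\kappa})<R_{\kappa}\le0$, so $|R_{\kappa}|<l(\tilde X_{\kappa})^{-}$ and $\mathbb{E}_{1}[|R_{\kappa}|;F_{0}]=O(1)$ uniformly in $B$. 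Because the walk has positive drift, $P_{1}(F_{0})\uparrow P_{1}(\kappa_{\infty}<\infty)=1-\rho$ with $\rho>0$, so $1-P_{1}(F_{0})\ge\rho$. Collecting terms, $I\,\mathbb{E}_{1}[\kappa]=(\log B)\,(1-P_{1}(F_{0}))+O(1)$, hence
\begin{eqnarray}
\frac{\mathbb{E}_{1}[\kappa]}{1-P_{1}(F_{0})}=\frac{\log B}{I}+O(1)\sim\frac{\log B}{I},\qquad B\to\infty, \no
\end{eqnarray}
and therefore $d(\mu^{*},T_{p})=\tfrac1p\,\tfrac{\mathbb{E}_{1}[\kappa]}{1-P_{1}(F_{0})}\sim\tfrac1p\,\tfrac{\log B}{I}$.

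Next I would show $\log\eta\sim\log B$. By Doob's maximal inequality for the $P_{\infty}$-martingale $e^{R_{m}}$, $1-P_{\infty}(F_{0})=P_{\infty}(\max_{1\le m\le\kappa}R_{m}\ge\log B)\le1/B$; since $\mathbb{E}_{\infty}[\kappa]\ge1$, \eqref{eq:far} gives $\eta\ge B$. Conversely, on $F_{0}^{c}$ we have $\kappa=\tau_{B}$ and $R_{\kappa}=\log B+\varrho$, so the change of measure yields the exact identity $1-P_{\infty}(F_{0})=\mathbb{E}_{1}[e^{-R_{\kappa}};F_{0}^{c}]=\tfrac1B\,\mathbb{E}_{1}[e^{-\varrho};F_{0}^{c}]$, and $\mathbb{E}_{1}[e^{-\varrho};F_{0}^{c}]$ stays bounded between two positive constants for $B$ large (it converges to $\mathbb{E}_{1}[e^{-\varrho_{\infty}};\{\kappa_{\infty}=\infty\}]>0$ by a standard renewal argument, using $P_{1}(\kappa_{\infty}=\infty)=\rho>0$ and tightness of the overshoot, itself a consequence of the uniform bound on $\mathbb{E}_{1}[\varrho]$). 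Together with $\mathbb{E}_{\infty}[\kappa]\uparrow\mathbb{E}_{\infty}[\kappa_{\infty}]<\infty$, \eqref{eq:far} then gives $B\le\eta\le C\,B$ for a constant $C$, so $\log\eta=\log B+O(1)$; in particular $\eta\to\infty\iff B\to\infty$. Combining with the previous paragraph and noting $|\log\eta|=\log\eta$ for large $\eta$,
\begin{eqnarray}
d(\mu^{*},T_{p})\sim\frac1p\,\frac{\log B}{I}\sim\frac1p\,\frac{\log\eta}{I}=\frac1p\,\frac{|\log\eta|}{I}. \no
\end{eqnarray}

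The \emph{main obstacle} is the uniform-in-$B$ control of the two $O(1)$ error terms — the mean overshoot $\mathbb{E}_{1}[\varrho]$ and the ``return'' contribution $\mathbb{E}_{1}[R_{\kappa};F_{0}]$ — which is exactly where nonlinear renewal theory enters, or equivalently where one imposes the standard regularity of Lorden's paper ($0<I<\infty$, $\mathbb{E}_{1}[(l(\tilde X_{1})^{+})^{2}]<\infty$, and a non-lattice assumption on $l(\tilde X_{1})$). Everything else is routine: Wald's identity, Doob's inequality, the exponential change of measure, the finiteness of $\mathbb{E}_{\infty}[\kappa_{\infty}]$ for a negative-drift walk, and the factor $1/p$ relating real time to the number of non-trivial observations, which in the equalizer case is already encoded in \eqref{eq:add} and is just Wald's identity for the Bernoulli$(p)$ energy-arrival sequence $\nu=\mu^{*}$.
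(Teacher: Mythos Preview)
Your argument is correct, but it takes a different route from the paper's. The paper's (omitted) proof follows the template of Lemma~\ref{lem:lorden_ext_asym}: it bounds CUSUM above by the one-sided SPRT via $S_{k}\ge\Lambda_{1:k}\Rightarrow T_{p}\le T_{s,1}$, invokes the standard SPRT delay estimate $\mathbb{E}_{1}^{\mu^{*}}[T_{s,1}]\sim|\log B|/(pI)$ from~\cite{Poor:Book:08}, and couples it with Lorden's ARL bound $\mathbb{E}_{\infty}[N]\ge B$ from~\cite{Lorden:AmS:71}; the matching lower bound is just the classical Lorden lower bound in the $\tilde X$-index domain, multiplied by the $1/p$ time-dilation factor already established in Proposition~\ref{prop:inner_performance}. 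In other words, the paper reduces everything to the known asymptotics of classical CUSUM and cites them as black boxes.

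You instead work directly from the closed forms~\eqref{eq:far}--\eqref{eq:add}: Wald's identity under $P_{1}$ extracts $\log B/I$ from $\mathbb{E}_{1}[\kappa]/(1-P_{1}(F_{0}))$, and Doob's maximal inequality together with the exponential change of measure under $P_{\infty}$ pins down $\log\eta=\log B+O(1)$. This is more self-contained---you re-derive Lorden's CUSUM asymptotic rather than cite it---but the price is the explicit overshoot control ($\mathbb{E}_{1}[\varrho]=O(1)$ and $\mathbb{E}_{1}[R_{\kappa};F_{0}]=O(1)$ uniformly in $B$), which the paper's SPRT comparison hides inside Proposition~4.11 of~\cite{Poor:Book:08}. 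You correctly flag this as the main technical point and impose the right regularity (finite second moment of $l^{+}$, non-lattice). Both arguments ultimately rest on the same random-walk facts; the paper simply packages them into a single citation.
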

\begin{proof}
This statement can be shown by discussing the relationship between one-sided sequential probability ratio test (SPRT) and CUSUM. The discussion is similar to the proof of Lemma \ref{lem:lorden_ext_asym}, therefore, we omit the proof for brevity.
\end{proof}

\section{Asymptotically optimal solution under the system level ARL constraint} \label{sec:external}
In this section, we consider (P2) and (P3). Since both the detection delay and the system level ARL constraint are related to the power allocation $\mu$, it is generally difficult to solve these coupled problems. Inspired by the previous section, we propose to use the simple detection strategy $(\mu^{*}, T_{p})$. We will show that this simple strategy is asymptotically optimal for (P2) and (P3) as $\gamma \rightarrow \infty$. 

The asymptotic optimality of $(\mu^{*}, T_{p})$ in the rare false alarm region ($\gamma \rightarrow \infty$) can be shown by two steps. In the first step, we derive a lower bound on the detection delay for any power allocation and detection scheme. In the second step, we show that $(\mu^{*}, T_{p})$ achieves this lower bound, which then implies that $(\mu^{*}, T_{p})$ is asymptotically optimal.

The following lemma presents our lower bound on the detection delay.
\begin{lem} \label{lem:ext_lowerbound}
As $\gamma \rightarrow \infty$,
\begin{eqnarray}
&&\hspace{-10mm}\inf\{ d(\mu, T): \mathbb{E}_{\infty}^{\mu}[T] \geq \gamma \} \no \\
&\geq& \inf\left\{ \sup_{t \geq 1} \mathbb{E}_{t}^{\mu}[T - t | T \geq t]: \mathbb{E}_{\infty}^{\mu}[T] \geq \gamma \right\} \no \\
&\geq& \frac{1}{p} \frac{|\log \gamma|}{I}(1+o(1)).
\end{eqnarray}
\end{lem}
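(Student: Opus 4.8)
The plan is to prove the two inequalities separately; the first is routine and the second carries the content. For the first inequality, fix an admissible pair $(\mu,T)$ and a change point $t$. Since $\{T\geq t\}\in\mathcal{F}_{t-1}$, averaging the conditional expectation over this event gives $\esssup\mathbb{E}_{t}^{\mu}[(T-t+1)^{+}\mid\mathcal{F}_{t-1}]\geq\mathbb{E}_{t}^{\mu}[(T-t+1)^{+}\mid T\geq t]$, and on $\{T\geq t\}$ one has $(T-t+1)^{+}=T-t+1\geq T-t$, so $d_{t}(\mu,T)\geq\mathbb{E}_{t}^{\mu}[T-t\mid T\geq t]$. Taking $\sup_{t\geq1}$ and then the infimum over admissible pairs yields the first inequality. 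It remains to show that any $(\mu,T)$ with $\mathbb{E}_{\infty}^{\mu}[T]\geq\gamma$ obeys $\sup_{t\geq1}\mathbb{E}_{t}^{\mu}[T-t\mid T\geq t]\geq\frac{1}{p}\frac{|\log\gamma|}{I}(1+o(1))$, which I would obtain by adapting the information-theoretic lower bound of Lorden~\cite{Lorden:AmS:71} (and Lai) to the energy-constrained observation model.

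Write $L=\sup_{t\geq1}\mathbb{E}_{t}^{\mu}[T-t\mid T\geq t]$. By Markov's inequality, for every window length $m$ and every $t$, $P_{t}^{\mu}(t\leq T<t+m\mid T\geq t)\geq 1-L/m$; since $P_{t}^{\mu}$ and $P_{\infty}^{\mu}$ coincide on $\mathcal{F}_{t-1}\supseteq\sigma(\{T\geq t\})$, this becomes $P_{t}^{\mu}(t\leq T<t+m)\geq(1-L/m)\,P_{\infty}^{\mu}(T\geq t)$. Because $\mu_{k}$ is a causal functional of the past observations, the past energy arrivals and the past allocations, while the law of $\{\nu_{k}\}$ is the same under $P_{t}^{\mu}$ and $P_{\infty}^{\mu}$, the likelihood ratio of $P_{t}^{\mu}$ with respect to $P_{\infty}^{\mu}$ on $\mathcal{F}_{n}$ $(n\geq t)$ equals $\Lambda_{n}=\exp\big(\sum_{k=t}^{n}l(Z_{k})\big)$, and by \eqref{eq:LR} this exponent sums the log-likelihood ratios of only the non-trivial observations in $[t,n]$. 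Under $P_{t}^{\mu}$ each of these has mean $I$, while the number of non-trivial observations in a window of length $m$ is at most $C+\sum_{k=t}^{t+m-1}\nu_{k}\leq pm(1+\delta)$ except on an event of probability exponentially small in $m$ (Hoeffding). Combining this with the strong law for the partial sums of the i.i.d.\ log-likelihood increments yields an event $G_{m}$ with $P_{t}^{\mu}(G_{m}^{c})\to 0$ as $m\to\infty$ on which $\log\Lambda_{n}\leq pmI(1+o(1))$ for all $n\in[t,t+m)$; this is the step that produces the factor $1/p$, since information accrues at rate $pI$, not $I$, per slot.

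Restricting the change of measure to $G_{m}\cap\{t\leq T<t+m\}$ (on which $\Lambda_{T}^{-1}=dP_{\infty}^{\mu}/dP_{t}^{\mu}\geq e^{-pmI(1+o(1))}$) gives $P_{\infty}^{\mu}(t\leq T<t+m)\geq e^{-pmI(1+o(1))}\big[(1-L/m)P_{\infty}^{\mu}(T\geq t)-P_{t}^{\mu}(G_{m}^{c})\big]$. Summing over the disjoint windows $t=1,m+1,\dots,(j-1)m+1$, the left side is at most $1$, while $\sum_{i=0}^{j-1}P_{\infty}^{\mu}(T\geq im+1)\geq\mathbb{E}_{\infty}^{\mu}[T\wedge jm]/m$, which for $j$ large enough is at least $\gamma/(2m)$. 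Choosing $m=m_{\gamma}$ to be a fixed fraction of $|\log\gamma|/(pI)$ keeps $L/m_{\gamma}$ bounded away from $1$ and makes $e^{pm_{\gamma}I(1+o(1))}=\gamma^{\rho}$ with $\rho<1$, hence $o(\gamma/m_{\gamma})$; the resulting inequality is then tenable only if $L\geq\frac{1}{p}\frac{|\log\gamma|}{I}(1+o(1))$, which is the claim.

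The main obstacle is the uniform control of the aggregated bad-event contribution $\sum_{i<j}P_{t}^{\mu}(G_{m}^{c})$: the number of windows $j$ needed to capture a constant fraction of $\mathbb{E}_{\infty}^{\mu}[T]$ is not bounded in terms of $\gamma$ alone, since the false-alarm time may have a heavy $P_{\infty}^{\mu}$-tail, so one must argue that this aggregate cannot swamp the $\gamma/(2m)$ term (which is precisely why one needs the exponential Hoeffding concentration of the energy count $\sum\nu_{k}$ rather than a mere law of large numbers, possibly combined with a preliminary reduction on the tail of $T$). Patching the energy-count concentration together with the control of the log-likelihood walk into a single path-wise estimate on $G_{m}$ is the other non-routine point; in the classical i.i.d.\ model the analogous bound is cleaner because the sampling schedule is deterministic rather than random and coupled to the detector through $\mu$. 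The remaining ingredients --- the optional-sampling justification of the change-of-measure identity and the bookkeeping of the telescoping sum --- follow the Lorden/Lai template.
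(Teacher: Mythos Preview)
Your approach is sound but takes a different route from the paper. The paper does not reconstruct the change-of-measure/windowing argument at all; it verifies the single hypothesis needed to invoke Theorem~1 of Lai~\cite{Lai:TIT:98} as a black box. Concretely, the paper proves (its Proposition~\ref{prop:convergence}) that for every admissible $\mu$ and every $t$,
\[
\lim_{m\to\infty}\esssup P_t^{\mu}\Bigl\{\tfrac{1}{m}\max_{0<q\leq m}\sum_{i=t}^{t+q}l(Z_i)\geq(1+\varepsilon)pI\ \Big|\ Z_1,\dots,Z_{t-1}\Bigr\}=0,
\]
and then quotes Lai's theorem, which already contains both inequalities in the lemma. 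The proof of this proposition is exactly your key observation that the number of non-trivial samples in $[t,t+m)$ is at most $C+\sum_{k=t}^{t+m-1}\nu_k$, so the cumulative log-likelihood grows at rate at most $pI$; the paper uses only the strong law here, not Hoeffding.

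The substantive content is therefore the same --- information accrues at rate $pI$ because of the causal energy constraint --- but the paper outsources precisely the step you flag as the main obstacle (uniform control of the aggregated bad-event mass $\sum_i P_{t_i}^{\mu}(G_m^c)$ over an unbounded number of windows) to Lai's result. Your direct argument can be completed, but note that Hoeffding addresses only the energy-count concentration; the log-likelihood partial sums still require a maximal bound, and without moment assumptions on $l(X)$ beyond integrability you get convergence in probability rather than a summable tail. Lai's proof handles this by choosing the window length together with the threshold so that a single $o(1)$ bound on $P_t^{\mu}(G_m^c)$ suffices, which is why citing his theorem is the cleaner path here and why the paper takes it.
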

\begin{proof}
Please see Appendix \ref{apd:ext_lowerbound}.
\end{proof}

This lower bound $|\log \gamma| (p I)^{-1} (1+o(1))$ can be obtained by $(\mu^{*}, T_{p})$ for both (P2) and (P3). More specifically, we have
\begin{lem} \label{lem:lorden_ext_asym}
$(\mu^{*}, T_{p})$ is asymptotically optimal for (P2) as $\gamma \rightarrow \infty$. Specifically,
\begin{eqnarray}
d(\mu^{*}, T_{p}) \sim \frac{1}{p} \frac{|\log \gamma|}{I}.
\end{eqnarray}
\end{lem}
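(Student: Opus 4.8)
The plan is to reduce $(\mu^{*},T_p)$ to an ordinary CUSUM run on an i.i.d. subsequence, with a random time change, and then invoke Lorden's classical asymptotics. Two structural facts drive everything. First, under the immediate allocation $\mu^{*}=\nu$ the observations are conditionally i.i.d., and the CUSUM statistic $S_k$ moves only at the energy-arrival epochs $a_1<a_2<\cdots$ (since $L(Z_k)=1$ whenever $\mu_k=0$); hence the embedded process is an ordinary CUSUM run on the (conditionally i.i.d.) subsequence $\{\tilde X_k\}$, composed with the random time change $k\mapsto a_k$. Second, the inter-arrival gaps $a_k-a_{k-1}$ are i.i.d. Geometric$(p)$ and independent of $\{\tilde X_k\}$, so that for any stopping time $M$ adapted to the subsequence, $\mathbb{E}[a_M]=\mathbb{E}[M]/p$ by Wald. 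Combining these, $T_p=a_N$ with $N$ the number of non-trivial observations at detection, $\mathbb{E}_\infty^{\mu^{*}}[T_p]=\mathbb{E}_\infty[N]/p$, and, using the equalizer property (Proposition~\ref{prop:equilazer}), $d(\mu^{*},T_p)=d_1(\mu^{*},T_p)=\mathbb{E}_1^{\mu^{*}}[T_p]=\mathbb{E}_1[N_1]/p$, where $N_1$ is the number of (post-change) non-trivial observations at detection when $t=1$.

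I would then fix the threshold at $B=p\gamma$. The false-alarm constraint is met because $\mathbb{E}_\infty^{\mu^{*}}[T_p]=\mathbb{E}_\infty[N]/p\ge B/p=\gamma$, by Lorden's ARL bound $\mathbb{E}_\infty[N]\ge B$ for the embedded i.i.d. CUSUM. For the delay: by the embedding, $\mathbb{E}_1[N_1]$ is the worst-case detection delay, measured in number of observations, of an ordinary CUSUM with threshold $B$ on an i.i.d. stream whose post-change LLR has mean $I=I(f_1,f_0)$, so Lorden's delay bound gives $\mathbb{E}_1[N_1]\le\frac{\log B}{I}(1+o(1))$ (it is dominated by the corresponding one-sided SPRT sample size); equivalently this follows from \eqref{eq:add} by Wald's identity $\mathbb{E}_1\big[\sum_{k=1}^{\kappa}l(\tilde X_k)\big]=I\,\mathbb{E}_1[\kappa]$ together with an $O(1)$ overshoot/undershoot estimate. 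Dividing by $p$ and inserting $\log B=\log(p\gamma)\sim\log\gamma$ yields $d(\mu^{*},T_p)\le\frac1p\frac{|\log\gamma|}{I}(1+o(1))$. The matching lower bound is free: $(\mu^{*},T_p)$ is itself feasible for (P2), so Lemma~\ref{lem:ext_lowerbound} gives $d(\mu^{*},T_p)\ge\frac1p\frac{|\log\gamma|}{I}(1+o(1))$. Hence $d(\mu^{*},T_p)\sim\frac1p\frac{|\log\gamma|}{I}$, and since Lemma~\ref{lem:ext_lowerbound} bounds every feasible pair by the same quantity, $(\mu^{*},T_p)$ is asymptotically optimal for (P2).

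I expect the main obstacle to be the bookkeeping around the random time change: verifying that $N$ and $N_1$ are stopping times of the subsequence filtration and are independent of the Geometric gaps, so that Wald applies and $\mathbb{E}_\infty^{\mu^{*}}[T_p]=\mathbb{E}_\infty[N]/p$ and $d(\mu^{*},T_p)=\mathbb{E}_1[N_1]/p$ hold exactly, together with the (routine, since $l(\tilde X_1)$ is integrable under both measures) $O(1)$ boundary-excess bound needed to pin $\mathbb{E}_1[\kappa]$ down as $(1-P_1(F_0))\frac{\log B}{I}(1+o(1))$. Once these are in place, the rest is Lorden's SPRT-versus-CUSUM comparison applied verbatim to the embedded chain, and the only genuinely new ingredient over the classical case---the factor $1/p$---is exactly the mean inter-arrival time.
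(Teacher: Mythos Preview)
Your proposal is correct and follows essentially the same approach as the paper: both reduce to $t=1$ via the equalizer property (Proposition~\ref{prop:equilazer}), bound the delay above by the one-sided SPRT sample size, invoke Lorden's ARL bound for feasibility, and close with the lower bound of Lemma~\ref{lem:ext_lowerbound}. The only cosmetic difference is that the paper works directly on the $\{Z_k\}$ sequence (where the effective post-change LLR drift is $pI$) and sets $B=\gamma$, whereas you pass explicitly to the embedded i.i.d.\ subsequence via the Wald identity $\mathbb{E}[a_N]=\mathbb{E}[N]/p$ and set $B=p\gamma$; since $\log(p\gamma)\sim\log\gamma$, both choices yield the same asymptotics.
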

\begin{proof}
Please see Appendix \ref{apd:lorden_ext_asym}.
\end{proof}

\begin{lem} \label{lem:pollak_ext_asym}
$(\mu^{*}, T_{p})$ is asymptotically optimal for (P3) as $\gamma \rightarrow \infty$. Specifically,
\begin{eqnarray}
\sup_{t \geq 1} \mathbb{E}_{t}^{\mu^{*}}\left[ T_{p}-t | T_{p} \geq t \right] \sim \frac{1}{p} \frac{|\log \gamma|}{I}.
\end{eqnarray}
\end{lem}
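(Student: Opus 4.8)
The plan is to pin down $\sup_{t\geq 1}\mathbb{E}_{t}^{\mu^{*}}[T_{p}-t\,|\,T_{p}\geq t]$ by a sandwich argument that reuses the two preceding lemmas. First I would fix the threshold $B$ of $T_{p}$ exactly as in Lemma~\ref{lem:lorden_ext_asym}, namely as the smallest constant for which $\mathbb{E}_{\infty}^{\mu^{*}}[T_{p}]\geq\gamma$. With this choice $(\mu^{*},T_{p})$ is feasible for (P3), hence it is one of the pairs competing in the infimum in Lemma~\ref{lem:ext_lowerbound}, and that lemma gives at once the lower bound
\[
\sup_{t\geq 1}\mathbb{E}_{t}^{\mu^{*}}[T_{p}-t\,|\,T_{p}\geq t]\;\geq\;\frac{1}{p}\,\frac{|\log\gamma|}{I}\,(1+o(1)).
\]

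For the matching upper bound I would first record the elementary fact that, for \emph{any} $\mu\in\mathcal{U}$ and $T\in\mathcal{T}$, Pollak's conditional average detection delay is dominated by Lorden's worst--case delay, $\sup_{t\geq 1}\mathbb{E}_{t}^{\mu}[T-t\,|\,T\geq t]\leq d(\mu,T)$. This is proved at each $t$: since $T$ is a stopping time the event $\{T\geq t\}$ belongs to $\mathcal{F}_{t-1}$, so combining $0\leq (T-t)^{+}\leq (T-t+1)^{+}$, the tower property given $\mathcal{F}_{t-1}$, and the essential supremum in \eqref{eq:tdd} yields
\[
\mathbb{E}_{t}^{\mu}[T-t\,|\,T\geq t]\;\leq\;\mathbb{E}_{t}^{\mu}\!\big[(T-t+1)^{+}\,\big|\,T\geq t\big]\;=\;\frac{\mathbb{E}_{t}^{\mu}\!\big[\mathbf{1}_{\{T\geq t\}}\,\mathbb{E}_{t}^{\mu}[(T-t+1)^{+}\,|\,\mathcal{F}_{t-1}]\big]}{P_{t}^{\mu}(T\geq t)}\;\leq\;d_{t}(\mu,T)\;\leq\;d(\mu,T).
\]
Applying this with $(\mu,T)=(\mu^{*},T_{p})$ and inserting the asymptotics $d(\mu^{*},T_{p})\sim p^{-1}|\log\gamma|/I$ from Lemma~\ref{lem:lorden_ext_asym} gives $\sup_{t\geq1}\mathbb{E}_{t}^{\mu^{*}}[T_{p}-t\,|\,T_{p}\geq t]\leq p^{-1}|\log\gamma|/I\,(1+o(1))$, and together with the lower bound this is exactly the claim of Lemma~\ref{lem:pollak_ext_asym}.

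There is no real obstacle remaining once Lemmas~\ref{lem:ext_lowerbound} and~\ref{lem:lorden_ext_asym} are in hand: the information--theoretic lower bound and the delay/ARL analysis of the CUSUM rule under the immediate allocation $\mu^{*}=\nu$ have already done the work, and the only new ingredients are the bookkeeping that the threshold making $\mathbb{E}_{\infty}^{\mu^{*}}[T_{p}]\geq\gamma$ also renders $(\mu^{*},T_{p})$ feasible for (P3), together with the one--line comparison of the two delay criteria above. Should one prefer to avoid even that comparison and argue instead within the Markov picture of Proposition~\ref{prop:equilazer}, one can note that conditioning on $\{T_{p}\geq t\}$ merely restricts the reflected statistic $W_{t-1}$ to $[1,B)$, that the resulting conditional delay is non--increasing in $W_{t-1}$, and that its largest value is $d_{1}(\mu^{*},T_{p})=d(\mu^{*},T_{p})$, attained at $W_{t-1}=1$; this gives the same upper bound.
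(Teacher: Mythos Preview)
Your proof is correct and, in fact, a bit more economical than the paper's. The paper does not invoke the general comparison ``Pollak $\leq$ Lorden'' and then apply Lemma~\ref{lem:lorden_ext_asym}; instead, it reproves the upper bound for the Pollak criterion from scratch via a one--sided SPRT comparison. Concretely, the paper writes $T_{p}=\inf_{t\geq 1}\{T_{s,t}+t-1\}$ with $T_{s,t}=\inf\{m\geq 1:\prod_{i=t}^{t+m-1}L(Z_{i})\geq B\}$, observes that $\{T_{p}\geq t\}\in\mathcal{F}_{t-1}$ and that $T_{s,t}$ is independent of $\mathcal{F}_{t-1}$ under $\mu^{*}$ (since $\{Z_k\}$ is conditionally i.i.d.), and deduces $\mathbb{E}_{t}^{\mu^{*}}[T_{p}-t\mid T_{p}\geq t]\leq \mathbb{E}_{t}^{\mu^{*}}[T_{s,t}]-1=\mathbb{E}_{1}^{\mu^{*}}[T_{s,1}]-1\sim |\log\gamma|/(pI)$. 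Your route replaces this SPRT step by the one--line tower--property argument that bounds the conditional average by $d_{t}(\mu^{*},T_{p})$, after which Lemma~\ref{lem:lorden_ext_asym} does the rest. Your approach is shorter and reuses the preceding lemma more fully; the paper's approach is slightly more self--contained and makes the SPRT structure explicit, which is convenient when the same estimate is needed again in the general energy model (Appendix~\ref{apd:general_asym}), where the conditional i.i.d.\ property fails and one cannot simply quote the Lorden asymptotics.
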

\begin{proof}
Please see Appendix \ref{apd:pollak_ext_asym}.
\end{proof}

As we mentioned in Section \ref{sec:model}, although we consider Pollak's formulation only under the system level ARL constraint in detail in this paper, the proposed strategy $(\mu^{*}, T_{p})$ is also asymptotically optimal for the formulation under the algorithm level ARL constraint, which is stated in the following proposition:
\begin{prop}
$(\mu^{*}, T_{p})$ is asymptotically optimal for Pollak's formulation under the algorithm level ARL constraint as $\eta \rightarrow \infty$, and we have
\begin{eqnarray}
\sup_{t \geq 1} \mathbb{E}_{t}^{\mu^{*}}\left[ T_{p}-t | T_{p} \geq t \right] \sim \frac{1}{p} \frac{|\log \eta|}{I}.
\end{eqnarray}
\end{prop}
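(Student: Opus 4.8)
The plan is to sandwich the Pollak-type delay of $(\mu^{*},T_{p})$: bound it from above by the Lorden-type delay already analyzed in Section~\ref{sec:internal}, and bound the delay of any competing feasible scheme from below by first reducing the algorithm-level constraint to a system-level constraint and then quoting Lemma~\ref{lem:ext_lowerbound}. This mirrors the remark made in Section~\ref{sec:model} that the system-level asymptotically optimal rule is also algorithm-level asymptotically optimal.

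For the upper bound I would first recall the standard fact that Pollak's delay never exceeds Lorden's. For an arbitrary $(\mu,T)$ and any $t$, the event $\{T\ge t\}$ is $\mathcal{F}_{t-1}$-measurable, so on that event $(T-t+1)^{+}=T-t+1\ge T-t$, and
\begin{eqnarray}
\mathbb{E}_{t}^{\mu}[T-t\mid T\ge t]\ \le\ \mathbb{E}_{t}^{\mu}[(T-t+1)^{+}\mid T\ge t]\ =\ \mathbb{E}_{t}^{\mu}\big[\,\mathbb{E}_{t}^{\mu}[(T-t+1)^{+}\mid\mathcal{F}_{t-1}]\ \big|\ T\ge t\big]\ \le\ d_{t}(\mu,T), \no
\end{eqnarray}
since a conditional average is at most the essential supremum; taking $\sup_{t}$ gives $\sup_{t}\mathbb{E}_{t}^{\mu}[T-t\mid T\ge t]\le d(\mu,T)$. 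Applying this to $(\mu^{*},T_{p})$ with the threshold $B$ tuned so that $\mathbb{E}_{\infty}[N]=\eta$ and then invoking Proposition~\ref{prop:inner_asym} yields
\begin{eqnarray}
\sup_{t\ge1}\mathbb{E}_{t}^{\mu^{*}}[T_{p}-t\mid T_{p}\ge t]\ \le\ d(\mu^{*},T_{p})\ \sim\ \frac{1}{p}\frac{|\log\eta|}{I}. \no
\end{eqnarray}

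For the lower bound I would take an arbitrary feasible pair $(\mu,T)$, i.e. $\mu\in\mathcal{U}$, $T\in\mathcal{T}$, $\mathbb{E}_{\infty}[N]\ge\eta$, and show it is also feasible for (P2) with a slightly smaller ARL level. Writing $A_{n}=\sum_{k=1}^{n}\nu_{k}$, the energy recursion $E_{k}=\min[C,E_{k-1}+\nu_{k}-\mu_{k}]\le E_{k-1}+\nu_{k}-\mu_{k}$ together with $E_{k}\ge0$ telescopes (pointwise, hence also at $n=T$) to $\sum_{k=1}^{n}\mu_{k}\le E_{0}-E_{n}+A_{n}\le C+A_{n}$, so that $N\le\sum_{k=1}^{T}\mu_{k}\le C+A_{T}$. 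Since $\{\nu_{k}\}$ is i.i.d.\ $\mathrm{Bernoulli}(p)$ and revealed causally, at time $k$ everything available before $\nu_{k}$ (namely $\mathbf{Z}_{1}^{k-1},\nu_{1}^{k-1},\mu_{1}^{k-1}$) is independent of $\nu_{k}$, so $\{A_{n}-pn\}$ is a martingale; as $T$ is a stopping time with $\mathbb{E}_{\infty}^{\mu}[T]<\infty$ and $A_{T}\le T$, Wald's identity gives $\mathbb{E}_{\infty}^{\mu}[A_{T}]=p\,\mathbb{E}_{\infty}^{\mu}[T]$. Therefore
\begin{eqnarray}
\mathbb{E}_{\infty}^{\mu}[T]\ =\ \frac{1}{p}\,\mathbb{E}_{\infty}^{\mu}[A_{T}]\ \ge\ \frac{1}{p}\big(\mathbb{E}_{\infty}[N]-C\big)\ \ge\ \frac{\eta-C}{p}\ =:\ \gamma, \no
\end{eqnarray}
so Lemma~\ref{lem:ext_lowerbound} applies with this $\gamma$: since $\gamma\to\infty$ as $\eta\to\infty$ and $\log\gamma=\log\eta+O(1)$, every feasible $(\mu,T)$ obeys $\sup_{t\ge1}\mathbb{E}_{t}^{\mu}[T-t\mid T\ge t]\ge\frac{1}{p}\frac{|\log\gamma|}{I}(1+o(1))=\frac{1}{p}\frac{|\log\eta|}{I}(1+o(1))$.

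The two bounds coincide to first order, so $(\mu^{*},T_{p})$ attains the universal lower bound asymptotically, which proves the asymptotic optimality and the stated delay expression. The only delicate step I anticipate is the reduction itself: justifying $\mathbb{E}_{\infty}^{\mu}[A_{T}]=p\,\mathbb{E}_{\infty}^{\mu}[T]$ by Wald's identity despite the closed-loop coupling between the allocation $\mu$ and the arrivals $\nu$, and verifying that $N\le C+A_{T}$ holds for \emph{every} admissible policy regardless of battery overflow. Once these are in place, the proposition is essentially a corollary of Lemma~\ref{lem:ext_lowerbound} and Proposition~\ref{prop:inner_asym}, and one would expect the same argument (indeed, most of the bookkeeping is already contained in the appendix proofs of Lemma~\ref{lem:inner} and Lemma~\ref{lem:ext_lowerbound}).
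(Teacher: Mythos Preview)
Your argument is correct and actually more careful than the paper's. The paper observes that under $\mu^{*}$ one has $\mathbb{E}_{\infty}^{\mu^{*}}[T_{p}]=\frac{1}{p}\mathbb{E}_{\infty}[N]$ (by the Wald computation of Proposition~\ref{prop:inner_performance}), so the algorithm-level constraint $\mathbb{E}_{\infty}[N]\ge\eta$ translates into the system-level constraint with $\gamma=\eta/p$, and then simply invokes Lemma~\ref{lem:pollak_ext_asym} in one line. That yields the delay formula immediately, but the closing ``hence it is asymptotically optimal under the algorithm level ARL constraint'' skips the step you make explicit: that \emph{every} algorithm-level feasible $(\mu,T)$, not just $(\mu^{*},T_{p})$, is system-level feasible for a comparable $\gamma$, so that the lower bound of Lemma~\ref{lem:ext_lowerbound} actually applies to all competitors. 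Your route---Pollak $\le$ Lorden plus Proposition~\ref{prop:inner_asym} for the upper bound, and the explicit reduction plus Lemma~\ref{lem:ext_lowerbound} for the lower bound---therefore fills a gap the paper leaves implicit.

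One simplification worth noting: your Wald-based estimate $\mathbb{E}_{\infty}^{\mu}[T]\ge(\eta-C)/p$ is correct but more than you need. Since at most one observation is taken per slot, $N=\sum_{k\le T}\mu_{k}\le T$ pointwise, whence $\mathbb{E}_{\infty}^{\mu}[T]\ge\mathbb{E}_{\infty}[N]\ge\eta$ directly; applying Lemma~\ref{lem:ext_lowerbound} with $\gamma=\eta$ already gives the lower bound $\frac{1}{p}\frac{|\log\eta|}{I}(1+o(1))$, because the factor $1/p$ there comes from $I_{1}=pI$, not from the ARL conversion. This sidesteps the martingale/overflow bookkeeping you flagged as delicate.
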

\begin{proof}
Following the similar argument used in Proposition \ref{prop:inner_performance}, we have
$$\mathbb{E}_{\infty}^{\mu^{*}}[T_{p}] = \mathbb{E}_{\infty}^{\mu^{*}}[a_{N}] = \mathbb{E}_{\infty}^{\mu^{*}}\left[\sum_{l=1}^{N}\tau_{l}\right] = \frac{1}{p}\mathbb{E}_{\infty}[N].$$
That is, under the immediate power allocation $\mu^{*}$, the algorithm level ARL constraint $\mathbb{E}_{\infty}[N] \geq \eta$ can be equivalently converted into a system level ARL constraint $\mathbb{E}_{\infty}^{\mu^{*}}[T_{p}]$. Setting $\gamma = \eta/p$ for a given $p$, $\eta \rightarrow \infty$ is equivalent to $\gamma \rightarrow \infty$. By Lemma \ref{lem:pollak_ext_asym}, $(\mu^{*}, T_{p})$ is asymptotically optimal under the system level ARL constraint, hence it is asymptotically optimal under the algorithm level ARL constraint.
\end{proof}

\section{Extension}\label{sec:extension}
In this section, we extend the original problem setup by assuming that the energy harvester can receive more than one unit energy at each time slot. Specifically, we assume that the energy arriving sequence $\nu=\{\nu_{1}, \ldots, \nu_{k}, \ldots\}$ is i.i.d. over $k$. $\nu_{k} \in \mathcal{V}=\{0, 1, 2, \ldots \}$, in which $\{ \nu_{k} = 0\}$ means that the energy harvester collects nothing at time slot $k$ and $\{ \nu_{k} = i\}$ means that the energy harvester collects $i$ units of energy at time $k$. We use $p_{i}=P^{\nu}(\nu_{k} = i)$ to denote its probability mass function (pmf). Then the energy left in the battery at the end of time slot $k$ is updated by
$$E_{k} = \min \{C, E_{k-1}+\nu_{k}-\mu_{k}\},$$
and the energy causality constraint indicates $E_{k} \geq 0$.

Under this setup, we consider (P2) and (P3). We propose to use a generalized immediate power allocation strategy:
\begin{eqnarray}
\tilde{\mu}_{k}^{*} = \left\{ \begin{array}{ll}
1 & \text{ if } E_{k-1} + \nu_{k} \geq 1 \no\\
0 & \text{ if } E_{k-1} + \nu_{k} = 0
\end{array} \right..
\end{eqnarray}
That is, the sensor keeps taking observations as long as the battery is not empty.

In the following, we show that this generalized immediate power allocation $\tilde{\mu}^{*}$ combined with Page's stopping time $T_{p}$ is asymptotically optimal for (P2) and (P3) in this random energy arriving case. Corresponding to Lemma \ref{lem:ext_lowerbound}, Lemma \ref{lem:lorden_ext_asym} and Lemma \ref{lem:pollak_ext_asym}, we have following two lemmas:
\begin{lem} \label{lem:general_lowerbound}
As $\gamma \rightarrow \infty$,
\begin{eqnarray}
&&\hspace{-10mm}\inf\{ d(\mu, T): \mathbb{E}_{\infty}^{\mu}[T] \geq \gamma \} \no \\
&\geq& \inf\left\{ \sup_{t \geq 1} \mathbb{E}_{t}^{\mu}[T - t | T \geq t]: \mathbb{E}_{\infty}^{\mu}[T] \geq \gamma \right\} \no \\
&\geq& \frac{1}{\tilde{p}} \frac{|\log \gamma|}{I}(1+o(1)),
\end{eqnarray}
where $\tilde{p} \doteq \mathbb{E}^{\nu}[\tilde{\mu}^{*}]$.
\end{lem}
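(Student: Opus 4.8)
The plan is to prove the two displayed inequalities in turn. The first inequality holds for \emph{any} observation model: since $\{T\ge t\}$ is $\mathcal F_{t-1}$-measurable, for each $t$
$$\esssup\mathbb E_t^\mu\!\left[(T-t+1)^+\mid\mathcal F_{t-1}\right]\ \ge\ \mathbb E_t^\mu\!\left[(T-t+1)^+\mid T\ge t\right]\ =\ \mathbb E_t^\mu\!\left[T-t+1\mid T\ge t\right]\ \ge\ \mathbb E_t^\mu\!\left[T-t\mid T\ge t\right],$$
and taking the supremum over $t$ gives $d(\mu,T)\ge\sup_{t\ge1}\mathbb E_t^\mu[T-t\mid T\ge t]$; this is identical to the corresponding step in the proof of Lemma~\ref{lem:ext_lowerbound}. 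The content is therefore the second inequality, which I would obtain by repeating the argument behind Lemma~\ref{lem:ext_lowerbound} (Appendix~\ref{apd:ext_lowerbound}) with the Bernoulli rate $p$ replaced everywhere by $\tilde p$.

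Concretely, fix any admissible $(\mu,T)$ with $\mathbb E_\infty^\mu[T]\ge\gamma$ and a window length $m$. After a change at $t$, the only part of the likelihood ratio that is not identically $1$ over slots $t+1,\dots,t+m$ comes from the non-trivial observations whose sampling times fall in $(t,t+m]$; let $\Delta N_t(m)$ be their number, so the accumulated log-likelihood ratio over that window equals a sum of $\Delta N_t(m)$ i.i.d.\ $l(\tilde X_k)$ terms, of typical size $\Delta N_t(m)\,I$. The information/change-of-measure argument used for Lemma~\ref{lem:ext_lowerbound} then shows that whenever $\Delta N_t(m)\,I$ stays below $|\log\gamma|(1-o(1))$ with $P_t^\mu$-probability bounded away from $0$, the constraint $\mathbb E_\infty^\mu[T]\ge\gamma$ forces $P_t^\mu(T<t+m\mid T\ge t)$ to be bounded away from $1$, hence $\sup_t\mathbb E_t^\mu[T-t\mid T\ge t]\ge m(1+o(1))$. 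Thus everything reduces to an a priori upper bound on $\Delta N_t(m)$, and optimising $m$ up to the largest value for which $\Delta N_t(m)\,I\le|\log\gamma|(1-o(1))$ produces exactly $|\log\gamma|(\tilde pI)^{-1}(1+o(1))$.

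The crux --- and the main obstacle --- is the claim that no admissible policy can collect non-trivial observations faster than the generalized greedy policy $\tilde\mu^*$, whose stationary rate is $\tilde p$; precisely, that $\mathbb E^\nu[\Delta N_t(m)]\le\tilde p\,m+C$ uniformly in $t$ and, by concentration, $\Delta N_t(m)\le\tilde p\,m(1+o(1))$ with probability $\to1$ as $m\to\infty$. I would establish this in three steps: (a) a pathwise coupling showing that, for any $\mu\in\mathcal U$ and $\tilde\mu^*$ started from the same battery level, the greedy battery stays no larger, $E_k^{\tilde\mu^*}\le E_k^\mu$ for all $k$ (induction on $k$), so that by the energy-balance identity ``harvested $=$ consumed $+$ stored $+$ overflow'' greedy consumes at least as much, i.e.\ $\sum_{k\le n}\mu_k\le\sum_{k\le n}\tilde\mu_k^*$; (b) a monotonicity bound showing that running $\tilde\mu^*$ from any battery level $c\le C$ yields at most $c$ extra non-trivial observations over all time compared with starting empty, so the $\le C$ units stored at the start of the window contribute only an $O(1)$ head start --- this is where the finiteness of the capacity $C$ is used in an essential way; and (c) an ergodic theorem: under $\tilde\mu^*$ the battery process $\{E_k\}$ is a finite irreducible aperiodic Markov chain, so $\frac1n\sum_{k\le n}\tilde\mu_k^*$ converges a.s.\ and in $L^1$ to its stationary mean $\tilde p=\mathbb E^\nu[\tilde\mu^*]$, with the usual Markov-chain concentration controlling fluctuations. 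Feeding the resulting bound on $\Delta N_t(m)$ into the argument of the preceding paragraph reproduces, verbatim with $p\mapsto\tilde p$, the proof of Lemma~\ref{lem:ext_lowerbound}. The remaining delicate point is bookkeeping: ensuring that the $o(1)$ terms arising from the concentration of $\Delta N_t(m)$, from the second-order terms in the change of measure, and from the transient of the greedy chain are all negligible relative to $m\asymp|\log\gamma|/(\tilde pI)$, uniformly in $t$.
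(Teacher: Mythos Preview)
Your proposal is correct and follows the same strategy as the paper: establish that the greedy rate $\tilde p=\mathbb E^{\nu}[\tilde\mu^{*}]$ exists (via ergodicity of the finite battery Markov chain, your step (c)) and upper-bounds the long-run sampling rate of any admissible $\mu$, then rerun the proof of Lemma~\ref{lem:ext_lowerbound} with $p$ replaced by $\tilde p$. Your pathwise coupling in steps (a)--(b) makes explicit the bound $\tilde m\le\hat m+C$ in the general arrival model, which the paper leaves implicit in the sentence ``the rest of the proof follows the one in Appendix~\ref{apd:ext_lowerbound} by replacing $p$ with $\tilde p$.''
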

\begin{proof}
Please see Appendix \ref{apd:general_lowerbound}.
\end{proof}

\begin{lem} \label{lem:general_asym}
$(\tilde{\mu}^{*}, T_{p})$ is asymptotically optimal for (P2) and (P3) as $\gamma \rightarrow \infty$. Specifically,
\begin{eqnarray}
d(\tilde{\mu}^{*}, T_{p}) \sim \frac{1}{\tilde{p}} \frac{|\log \gamma|}{I},
\end{eqnarray}
and
\begin{eqnarray}
\sup_{t \geq 1} \mathbb{E}_{t}^{\tilde{\mu}^{*}}[T_{p} - t | T_{p} \geq t] \sim \frac{1}{\tilde{p}} \frac{|\log \gamma|}{I},
\end{eqnarray}
\end{lem}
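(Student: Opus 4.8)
The plan is to prove the upper bound matching Lemma~\ref{lem:general_lowerbound} for the concrete pair $(\tilde\mu^*, T_p)$ and then close the gap by sandwiching. I would fix the threshold at $B=\gamma$. The CUSUM statistic $S_k$ changes only at sampling instants: at a non-sampling slot $L(Z_k)=1$, so if Page's time occurred at such a slot $k$ we would have $S_k=\max[S_{k-1},1]<B$ (using $S_{k-1}<B$ and $B>1$), a contradiction; hence $T_p=b_N$, where $N$ is exactly the stopping index of the classical CUSUM run on the non-trivial observation sequence $\{\tilde X_l\}$. Note that $\tilde X_l=X_{b_l}$ is i.i.d.\ $f_0$ under $P_\infty$ for \emph{any} allocation (and, under $P_t$, i.i.d.\ $f_1$ from the first sample taken at or after time $t$ on), because observation values are decoupled from sampling times. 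Since the inter-sample gaps $\tau_l=b_l-b_{l-1}$ are $\geq1$, $\mathbb{E}_\infty^{\tilde\mu^*}[T_p]=\mathbb{E}_\infty[b_N]\geq\mathbb{E}_\infty[N]\geq B=\gamma$, the last step being Lorden's classical ARL bound; so $(\tilde\mu^*, T_p)$ is feasible for (P2) and (P3), and it remains only to show $d(\tilde\mu^*, T_p)\leq\frac{1}{\tilde p}\frac{|\log\gamma|}{I}(1+o(1))$.

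For the delay, conditioning on any $\mathcal F_{t-1}$ the worst case drives the statistic to its floor ($W_{t-1}=1$); after the change the increments $l(\tilde X_l)$ are i.i.d.\ with positive mean $I$, so by Lorden's first-passage analysis relating the one-sided SPRT to CUSUM (as in the proofs of Proposition~\ref{prop:inner_asym} and Lemma~\ref{lem:lorden_ext_asym}), the number of \emph{post-change non-trivial} observations needed to reach $\log B$ is $\frac{\log B}{I}(1+o(1))=\frac{|\log\gamma|}{I}(1+o(1))$. To convert this into real slots, observe that under $\tilde\mu^*$ the battery $\{E_k\}$ is a finite, ergodic Markov chain driven by the i.i.d.\ arrivals $\nu$ and independent of the observation values; its stationary sampling probability is $P_{\text{stat}}(E_{k-1}+\nu_k\geq1)=\mathbb{E}^\nu[\tilde\mu^*]=\tilde p$, and a gap $\tau_l$ exceeds $m$ only if $m$ consecutive slots carry no arrival, so $\mathbb P(\tau_l>m\mid\text{past})\leq p_0^{m-1}$, a uniform geometric tail. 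By the renewal--reward/ergodic theorem for this chain, $b_n/n\to 1/\tilde p$ a.s.; combining this with $N\to\infty$ as $\gamma\to\infty$ and the uniform integrability furnished by the geometric tails shows the $\frac{|\log\gamma|}{I}(1+o(1))$ post-change samples are gathered within $\frac{1}{\tilde p}\frac{|\log\gamma|}{I}(1+o(1))$ slots, while the residual wait to the first post-change sample when the battery happens to be empty has mean $O(1)$ uniformly in $\mathcal F_{t-1}$ (the recovery time of the chain from state $0$) and is absorbed in the $o(1)$.

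Taking $\esssup_{\mathcal F_{t-1}}$ and then $\sup_{t\geq1}$ gives $d(\tilde\mu^*, T_p)\leq\frac{1}{\tilde p}\frac{|\log\gamma|}{I}(1+o(1))$; with Lemma~\ref{lem:general_lowerbound} this yields $d(\tilde\mu^*, T_p)\sim\frac{1}{\tilde p}\frac{|\log\gamma|}{I}$ and the asymptotic optimality for (P2). For (P3), the standard domination of the worst-case conditional delay by the worst--worst-case delay, together with Lemma~\ref{lem:general_lowerbound} applied to the feasible rule $(\tilde\mu^*, T_p)$, gives $\frac{1}{\tilde p}\frac{|\log\gamma|}{I}(1+o(1))\leq\sup_{t\geq1}\mathbb{E}_t^{\tilde\mu^*}[T_p-t\mid T_p\geq t]\leq d(\tilde\mu^*, T_p)\sim\frac{1}{\tilde p}\frac{|\log\gamma|}{I}$, as claimed.

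The main obstacle is the loss of the clean i.i.d.\ structure that in the Bernoulli case made $\mu^*=\nu$ an equalizer (Proposition~\ref{prop:equilazer}) and reduced everything to Wald's identity (Proposition~\ref{prop:inner_performance}): under $\tilde\mu^*$ the allocation sequence, hence the inter-sample gaps $\{\tau_l\}$, are Markov-modulated by the battery rather than i.i.d. The delicate point is to make rigorous ``the non-trivial observations arrive at rate $\tilde p$'', i.e.\ that $N$ post-change samples occupy $N/\tilde p$ slots up to lower-order terms, \emph{uniformly} over the conditioning on $\mathcal F_{t-1}$ and over the random, allocation-dependent index at which the change falls in the non-trivial sequence. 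The finiteness of the battery capacity $C$ is exactly what rescues the argument, since it forces the geometric gap tails needed for the law of large numbers, the uniform integrability, and the $O(1)$ bound on the empty-battery recovery time; everything else --- the reduction $T_p=b_N$, the ARL bound $\mathbb{E}_\infty[N]\geq B$, the SPRT/CUSUM first-passage asymptotics, and the CADD $\leq$ WADD domination --- is classical or already available in the excerpt.
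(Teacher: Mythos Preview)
Your proposal is correct and takes a genuinely different route from the paper. For (P2) the paper verifies the hypotheses of Theorem~4(ii) in \cite{Lai:TIT:98} --- an LLN-type condition on the conditional LLR partial sums (Proposition~\ref{prop:laicondition2}) and a windowed false-alarm bound obtained via Doob's submartingale inequality (Proposition~\ref{prop:laicondition3}) --- and then invokes that theorem as a black box; for (P3) it runs a separate argument, bounding the conditional delay by the one-sided SPRT $T_{s,t}$, reducing to the empty-battery initial state, and applying the strong law to an artificially constructed i.i.d.\ sequence of SPRT repetitions. You instead exploit directly the decomposition $T_p=b_N$ together with the independence of the observation values $\{\tilde X_l\}$ from the battery/arrival process, so that $N$ (the classical CUSUM stopping index on $\{\tilde X_l\}$) and the inter-sample gaps $(\tau_l)$ are independent: this lets you import the classical CUSUM bounds $\mathbb{E}_\infty[N]\ge B$ and $\mathbb{E}_1[N]\sim\log B/I$ off the shelf, and convert sample counts to slots via the uniform renewal estimate $\mathbb{E}[b_n\mid E_0=e]=n/\tilde p+O(1)$ for the finite ergodic battery chain, which is precisely what neutralizes the $\esssup$ over $\mathcal{F}_{t-1}$ and the $\sup$ over $t$. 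Your (P3) step --- sandwiching the conditional-average delay between the lower bound of Lemma~\ref{lem:general_lowerbound} and the Lorden delay you have just controlled --- is also cleaner than the paper's dedicated SPRT construction. The paper's route through Lai's general theorem would extend more readily to models with dependent observations; your route is self-contained, avoids the external theorem, and makes explicit exactly where the product structure and the finiteness of the capacity $C$ are used.
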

\begin{proof}
Please see Appendix \ref{apd:general_asym}.
\end{proof}

\section{Numerical Simulation} \label{sec:simulation}
In this section, we give some numerical examples to illustrate the analytical results obtained in this paper. In these numerical examples, we assume that the pre-change distribution $f_{0}$ is zero mean Gaussian with variance $\sigma^2$ and the post-change distribution $f_{1}$ is zero mean Gaussian with variance $P+\sigma^2$. In this case, the KL divergence is $I(f_{1}, f_{0}) = \frac{1}{2}\left[ \log \frac{1}{1+P/\sigma^2} + \frac{P}{\sigma^{2}}\right]$, and the signal-to-noise ratio is defined as $SNR = 10 \log P/\sigma^2$.

In the first example, we illustrate the equalizer property of $( \mu^{*}, T_{p})$ under Lorden's formulation. The equalizer property plays a critical role in the performance analysis, since it allows us to study $d(\mu^{*}, T_{p})$ through a relatively simple expression $\mathbb{E}_{1}^{\mu^{*}}[T_{p}]$. In this example, we compare our optimal strategy with a seemingly reasonable strategy: a save-test power allocation scheme combined with CUSUM. The save-test power allocation is a two-threshold strategy: 1) The sensor saves the collected energy for future use if the energy stored in the sensor is less than a threshold $c_{1}$ and the CUSUM statistic is less than threshold $c_{2}$; and 2) the sensor takes observation when either of these two thresholds is exceeded. This rule says that if the CUSUM statistic is low (suggesting that a change has not happened yet) and the energy stored in the sensor is low, the sensor saves its energy. On the other hand, if either the sensor has enough energy, or the CUSUM statistic is high, the sensor should take an observation.
In this simulation, we set $\sigma^2=1$, $SNR = 0dB$, $p=0.5$ and $\gamma = 560$. The simulation result is shown in Figure \ref{fig:equalizer}. In the figure, the blue line with circles is the performance of $(\mu^{*}, T_{p})$, the green dash line with stars is the performance of the save-test power allocation with CUSUM. This simulation confirms our analysis that $(\mu^{*}, T_{p})$ is an equalizer rule, i.e., $d_{1}(\mu^{*}, T_{p})=d_{t}( \mu^{*}, T_{p})$. However, the save-test power allocation scheme along with CUSUM is not an equalizer rule. Actually, in the save-test power allocation scheme, $d_{1}(\mu, T)$ is larger than others. This is due to the fact that in the first time slot, both the CUSUM statistic and the energy stored in the sensor are zero, hence the sensor chooses to store its energy. The sensor will not take observations until the stored energy exceeds $c_{2}$. The duration of this energy collection period is independent of the change point. Then, the worst case happens at $t=1$, and the detection delay caused by the energy collection period is larger than that caused by the immediate power allocation. Since Lorden's performance metric focuses on the worst case, the save-test power allocation is not as good as the immediate power allocation.

\begin{figure}[thb]
\centering
\includegraphics[width=0.45 \textwidth]{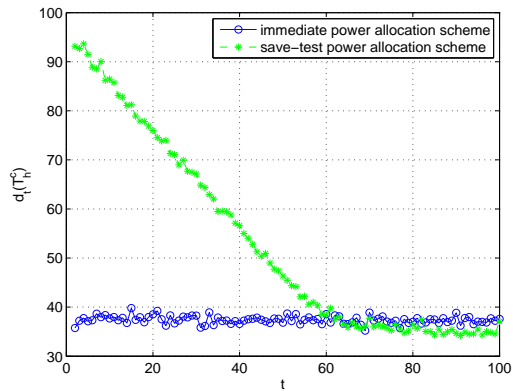}
\caption{The change point $t$ vs $d_{t}(T_{p})$}
\label{fig:equalizer}
\end{figure}

In the second example, we illustrate the relationship between the detection delay and the expected number of observations to false alarm with respect to the energy arriving probability $p$ under setup (P1). In this simulation, we set $\sigma^2 = 1$, $SNR = 0dB$. The simulation result is shown in Figure \ref{fig:ADD_vs_Pfa_1}. In this figure, the blue line with circles is the simulation result for $p=0.2$, the green line with stars and the red line with squares are the results for $p=0.5$ and $p=0.8$, respectively. The black dash line is the performance of the classical Lorden's problem, which serves as a lower bound since in this case the sensor can take observations at every time slot. As we can see, for a given $\eta$, the detection delay is in inverse proportion to the energy arriving probability $p$. The larger $p$ is, the closer is the performance to the lower bound.

\begin{figure}[thb]
\centering
\includegraphics[width=0.45 \textwidth]{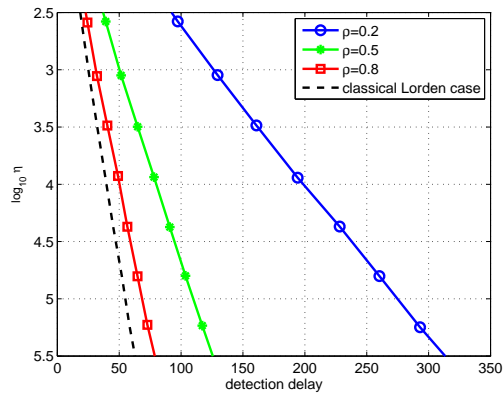}
\caption{Detection delay v.s. the algorithm level ARL}
\label{fig:ADD_vs_Pfa_1}
\end{figure}

\begin{figure}[thb]
\centering
\includegraphics[width=0.45 \textwidth]{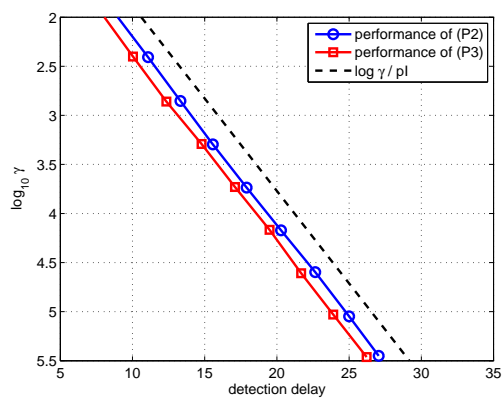}
\caption{Detection delay v.s. the system level ARL}
\label{fig:asym}
\end{figure}

In the third scenario, we examine the asymptotic optimality of $(\mu^{*}, T_{p})$ for (P2) and (P3). In this simulation, we set $p=0.3$, $\sigma^{2}=1$ and $SNR = 5dB$. In this case, we have $I(f_{1}, f_{0}) = 0.8681$. The simulation result is shown in Figure \ref{fig:asym}. In this figure, the blue line with circles is the performance of (P2). The red line with squares is the performance of (P3), and the black dash is calculated by $|\log \gamma|/pI$. Along all the scales, the red curve is below the blue one, which indicates that Pollak's detection delay is smaller than Lorden's detection delay. We also notice that these three curves are parallel to each other, which confirms that the proposed strategy, $(\mu^{*}, T_{p})$, is asymptotically optimal since the difference between them is negligible as $\gamma \rightarrow \infty$.

In the last scenario, we examine the asymptotic optimality of $(\tilde{\mu}^{*}, T_{p})$ for (P2) and (P3) in the extension case that the energy arrives randomly both in amount and in time. In the simulation, we use $C=3$, and we assume that the amount of energy arrives at each time slot takes values in the set $\mathcal{V} = \{0, 1, \ldots, 4\}$. In this case, the probability transition matrix is given as
\begin{eqnarray}
\mathbf{P} = \left[ \begin{array}{c c c c}
p_{0}+p_{1}, &p_{2}, &p_{3}, &p_{4} \\
p_{0}, &p_{1}, &p_{2}, &p_{3}+p_{4} \\
0, &p_{0}, &p_{1}, &\sum_{i=2}^{4} p_{i} \\
0, &0, &p_{0}, &\sum_{i=1}^{4} p_{i} \end{array} \right],
\end{eqnarray}
In the simulation, we set $p_{0}=0.8$, $p_{1}=0.1$, $p_{2}=0.05$, $p_{3}=0.025$, $p_{2}=0.025$, then the stationary distribution is $\tilde{\mathbf{w}}=[0.0182, 0.0545, 0.2000, 0.7273]^{T}$ and $\tilde{p} = 1-p_{0}\tilde{w}_{0} = 0.9964$.

\begin{figure}[thb]
\centering
\includegraphics[width=0.45 \textwidth]{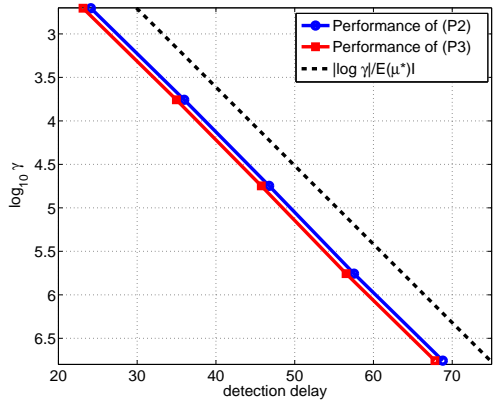}
\caption{Detection delay v.s. the system level ARL}
\label{fig:general_asym}
\end{figure}

In this simulation, we set $\sigma^{2}=1$ and $SNR = 5dB$. The simulation result is shown in Figure \ref{fig:general_asym}. In this figure the blue line with circles is the performance of (P2). The red line with squares is the performance of (P3), and the black dash is calculated by $|\log \gamma|/\tilde{p}I$. Similar to the results obtained in the third simulation scenario, along all the scales, Pollak's detection delay is smaller than Lorden's detection delay, and these three curves are parallel to each other, which confirms that the proposed strategy, $(\tilde{\mu}^{*}, T_{p})$, is asymptotically optimal as $\gamma \rightarrow \infty$.

\section{Conclusion} \label{sec:conclusion}
In this paper, we have studied the non-Bayesian quickest detection problem using a sensor powered by the energy harvested from the environment. Since the energy harvester collected the energy randomly, the quickest detection problem is subjected to a casual energy constraint. Three non-Bayesian quickest detection problem setups, namely Lorden's problem under the algorithm level ARL, Lorden's problem under the system level ARL and Pollak's problem under the system level ARL, have been considered. For the binary energy arriving model, we have shown that the immediate power allocation scheme coupled with CUSUM detection procedure is optimal for the first setup, and is asymptotically optimal for the second and the third setup as ARL goes to infinity. For the more general energy arriving model, we have shown that the proposed generalized immediate power allocation coupled with CUSUM is still asymptotically optimal for the second and third setups.

\appendices
\section{Proof of Lemma \ref{lem:inner}} \label{apd:inner}
We first introduce a notion of quasi change point. For any realization of the power allocation $\mu$, the quasi change point of the non-trivial observation sequence is defined as
\begin{eqnarray}
n = \inf\{k: \tilde{X}_{k} \sim f_{1} \} = \inf \{k: b_{k} \geq t\}. \label{eq:quasi}
\end{eqnarray}
This implies that $n$ can be viewed as the change point happening in the non-trivial observation sequence $\left\{ X_{k}^{(a_{k}, b_{k})} \right\}$. Therefore, a rule minimizing the detection delay $(T-t)^{+}$ among $\{Z_{k}\}$ is the same as the one minimizing $(N-n)^{+}$ among $\left\{ X_{k}^{(a_{k}, b_{k})} \right\}$. Specifically, the stopping rule is decided by
\begin{eqnarray}
&& \min_{N} \sup_{n \geq 1} \esssup \mathbb{E}_{n}\left[ (N-n+1)^{+}|\mathcal{F}_{n-1} \right], \no\\
&& \text{s.t. } \mathbb{E}_{\infty}[N] \geq \eta. \no
\end{eqnarray}
This is the classical Lorden's quickest detection problem \cite{Lorden:AmS:71}, and the optimal solution is given as Page's stopping time $T_{p}$ in \cite{Moustakides:AnS:86} with threshold $B$, which is a constant solely related to $\eta$ and achieves $\mathbb{E}_{\infty}[N] = \eta$.

To prove the optimality of $\mu^{*}$, we examine the following problem:
\begin{eqnarray}
&& \min_{\mu \in \mathcal{U}} \mathbb{E}_{1}^{\mu}[T_{p}] , \no \\
&& \textrm{s.t. } \mathbb{E}_{\infty}[N] = \eta .\label{eq:d1}
\end{eqnarray}
Notice that the objective function is the same as $d_{1}(\mu, T_{p})$. Since
$$\mathbb{E}_{1}^{\mu}[T_{p}] = \mathbb{E}_{1}^{\mu}[b_{N}] \overset{(a)}\geq \mathbb{E}_{1}^{\nu}[a_{N}] \overset{(b)}= \mathbb{E}_{1}^{\mu^{*}}[T_{p}],$$
in which inequality (a) is due to \eqref{eq:causality2}, and equality (b) is true because $T_{p} = a_{N}$ under $\mu^{*} = \nu$. Therefore, $\mu^{*}$ is optimal for the problem \eqref{eq:d1}.

Since
\begin{eqnarray}
\min_{\mu, T} d_{1}(\mu, T) = d_{1}(\mu^{*}, T_{p}) = d_{t}( \mu^{*}, T_{p}), \no
\end{eqnarray}
in which the last equality is due to Proposition \ref{prop:equilazer}, we have
\begin{eqnarray}
d(\mu^{*}, T_{p})=d_{1}(\mu^{*}, T_{p}).  \no
\end{eqnarray}
Combining this with the fact that
\begin{eqnarray}
d(\mu, T) \geq  d_{1}(\mu, T), \no
\end{eqnarray}
we know that $(\mu^{*},T_{p})$ is the optimal solution for (P1).

\section{Proof of Proposition \ref{prop:inner_performance}} \label{apd:inner_performance}
We first examine the quantity $\mathbb{E}_{\infty}[N]$. Consider the non-trivial observation sequence $\left\{X_{k}^{(a_{k}, a_{k})}\right\}$, let $M_{j}$ denote the indicator of the event that the $j^{th}$ repetition of $\kappa$ exits at the upper boundary. That is $M_j=1$ if the $j^{th}$ repetition exits at the upper boundary, and $M_j=0$ if the $j^{th}$ repetition exits at the lower boundary. Let $J$ be a stopping time with respect to the sequence $(\kappa_{1}, M_{1}), (\kappa_{2}, M_{2}), \ldots$, which is i.i.d. under $P_{\infty}$, such that $J=\inf\{j: M_j=1\}$. One can check that $N=\sum_{j=1}^{J} \kappa_{j}$.

From Wald's identity, we have
\begin{eqnarray}
\mathbb{E}_{\infty}[N] = \mathbb{E}_{\infty}\left[ \sum_{j=1}^{J} \kappa_{j}\right] = \mathbb{E}_{\infty}[J]\mathbb{E}_{\infty}[\kappa]. \label{eq:E_N}
\end{eqnarray}

It is easy to see that, under $P_{\infty}$, $J$ is a geometric random variable with
$$P_{\infty}(J=j) = \left[ 1-P_{\infty}(F_{0})\right]\left[ P_{\infty}(F_{0})\right]^{j-1}, \; j=1,2,\ldots.$$
Then, we have
\begin{eqnarray}
\mathbb{E}_{\infty}[J] = \frac{1}{1-P_{\infty}(F_{0})}. \label{eq:E_J}
\end{eqnarray}
Substituting \eqref{eq:E_J} into \eqref{eq:E_N}, we have \eqref{eq:far}.

Following the similar argument as above, we get
$$\mathbb{E}_{1}[N] = \frac{\mathbb{E}_{1}[\kappa]}{1-P_{1}(F_{0})}. $$
Denote $\tau_{i} = a_{i}-a_{i-1}$ as the time interval between two successive observations, the p.m.f. of $\tau_{i}$ is
$$ P(\tau_{i} = j) = (1-p)^{j-1}p,$$
and the average of the time interval between two successive observations is
$$ \mathbb{E}^{\nu}[\tau]  = \frac{1}{p}. $$

For the average detection delay, we have
\begin{eqnarray}
d(\mu^{*}, T_{p}) &=& d_{1}(\mu^{*}, T_{p})\no \\ & =& \mathbb{E}_{1}^{\mu^{*}}[T_{p}] \no \\
&=& \mathbb{E}_{1}^{\mu^{*}}[a_{N}] \no \\ & =& \mathbb{E}_{1}^{\mu^{*}}\left[\sum_{i=1}^{N}\tau_{i}\right] \no \\
&\overset{(a)}=& \mathbb{E}^{\nu}\left[\tau\right]\mathbb{E}_{1}\left[ N \right] \no \\
&=& \frac{1}{p}\mathbb{E}_{1}[N]. \no
\end{eqnarray}
Here, (a) is due to the Wald's identity
. Then \eqref{eq:add} follows.

\section{Proof of Lemma \ref{lem:ext_lowerbound}} \label{apd:ext_lowerbound}
This proof relies on several supporting propositions and Theorem 1 of \cite{Lai:TIT:98}.

\begin{prop}\label{prop:convergence}
For an arbitrary but given power allocation $\mu$, we have
\begin{eqnarray}
&&\hspace{-8mm} \lim_{m \rightarrow \infty} \esssup P_{t}^{\mu} \left\{ \frac{1}{m} \max_{0< q \leq m} \sum_{i=t}^{t+q}l(Z_{i}) \geq (1+\varepsilon)I_{1} \Bigg| Z_{1},\ldots,Z_{t-1}\right\} \rightarrow 0  \quad \forall \varepsilon > 0,
\end{eqnarray}
where $I_{1} = pI$.
\end{prop}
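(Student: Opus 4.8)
The plan is to control the normalised maximum of the accumulated log-likelihood ratio by two elementary facts --- the energy-causality constraint, which caps how many non-trivial observations can be gathered in the window $t,\ldots,t+m$, and the strong law of large numbers, applied both to the i.i.d.\ arrival process $\nu$ and to the i.i.d.\ post-change log-likelihood ratios --- which together furnish exactly the growth-rate hypothesis required to invoke Theorem~1 of \cite{Lai:TIT:98} in the proof of Lemma~\ref{lem:ext_lowerbound}; the per-slot information rate is $I_1=pI$ rather than $I$ because only a fraction $p$ of the slots can carry a fresh observation. For the reduction: under $P_t^\mu$ the change occurs at $t$, so every observation taken in a slot $\ge t$ has density $f_1$; since $l(Z_i)=0$ whenever $\mu_i=0$, and since $E_k\ge0$ together with $E_{t-1}\le C$ forces $\sum_{i=t}^{t+q}\mu_i\le C+\sum_{i=t}^{t+q}\nu_i$, the number $K_q$ of non-trivial observations in slots $t,\ldots,t+q$ satisfies $K_q\le C+A_m$ for every $q\le m$, where $A_m:=\sum_{i=t}^{t+m}\nu_i$. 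Writing $\hat X_1,\hat X_2,\ldots$ for the successive non-trivial observations taken at times $\ge t$, this yields
\[
\max_{0<q\le m}\sum_{i=t}^{t+q}l(Z_i)\;=\;\max_{0<q\le m}\sum_{j=1}^{K_q}l(\hat X_j)\;\le\;\max_{0\le K\le C+A_m}\sum_{j=1}^{K}l(\hat X_j).
\]

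Next I would observe that both i.i.d.\ ingredients are untouched by $\mu$ and by the conditioning. The variable $A_m$ is a function of $\nu_t,\ldots,\nu_{t+m}$, which are independent of $\mathcal F_{t-1}$, so $P_t^\mu\{A_m>(1+\delta)pm\mid\mathcal F_{t-1}\}=P^\nu\{A_m>(1+\delta)pm\}$ --- a deterministic quantity, independent of $t$ --- and it tends to $0$ as $m\to\infty$ for every $\delta>0$ by the SLLN. The delicate point is that, conditionally on $\mathcal F_{t-1}$, the sequence $\hat X_1,\hat X_2,\ldots$ is i.i.d.\ with density $f_1$ for \emph{every} causal $\mu$: by an optional-skipping argument, the event that slot $i$ is sampled is measurable with respect to $(\mathbf{Z}_1^{i-1},\nu_1^{i},\mu_1^{i-1})$ and hence independent of the fresh variable $X_i$, so each newly sampled post-change $X_i$ is an independent $f_1$-draw, and neither the data-dependent skipping induced by $\mu$ nor the conditioning on $\mathcal F_{t-1}$ distorts these statistics. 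Consequently $S_K:=\sum_{j=1}^{K}\bigl(l(\hat X_j)-I\bigr)$ is conditionally a mean-zero random walk, and since $S_K/K\to0$ a.s.\ implies $\max_{1\le K\le n}|S_K|/n\to0$ a.s., the quantity $P_t^\mu\{\max_{0\le K\le n}\sum_{j=1}^{K}l(\hat X_j)>(1+\delta)nI\mid\mathcal F_{t-1}\}$ is again deterministic, $t$-free, and tends to $0$ as $n\to\infty$.

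To assemble the estimate: on the event $\{A_m\le(1+\delta)pm\}$ the index range in the first display lies in $[0,n_m]$ with $n_m:=C+(1+\delta)pm$, so a union bound gives
\[
\esssup P_t^\mu\!\Bigl\{\tfrac1m\max_{0<q\le m}\sum_{i=t}^{t+q}l(Z_i)\ge(1+\varepsilon)pI\,\Big|\,\mathcal F_{t-1}\Bigr\}\le P^\nu\{A_m>(1+\delta)pm\}+\esssup P_t^\mu\!\Bigl\{\max_{0\le K\le n_m}\sum_{j=1}^{K}l(\hat X_j)\ge(1+\varepsilon)pI\,m\,\Big|\,\mathcal F_{t-1}\Bigr\}.
\]
Choosing $\delta>0$ with $(1+\delta)^2<1+\varepsilon$, one has $n_mI(1+\delta)=(1+\delta)^2pI\,m+O(1)<(1+\varepsilon)pI\,m$ for all large $m$, so both terms on the right vanish as $m\to\infty$; since the bound is free of $t$, the conclusion holds (uniformly in $t$), which is the claim with $I_1=pI$.

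The one genuinely nontrivial step is the distributional-invariance claim above: that for an \emph{arbitrary adaptive} power allocation, even after conditioning on $\mathcal F_{t-1}$, the non-trivial post-change observations remain i.i.d.\ $f_1$, so that the adversarial ``skimming'' effected by $\mu$ cannot accelerate the accumulation of likelihood-ratio information beyond the rate $pI$ dictated by the arrival process. Once this invariance is in place --- and with it the fact that the two conditional probabilities above are deterministic and do not depend on $t$ --- the causality inequality and the two strong laws are routine.
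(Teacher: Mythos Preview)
Your argument is correct and rests on the same three ingredients the paper uses --- the energy-causality bound on the number of post-change observations, the SLLN for the Bernoulli arrival process, and the SLLN for the i.i.d.\ post-change log-likelihood ratios --- but the packaging differs. The paper first establishes the almost-sure inequality $\limsup_{m\to\infty}\tfrac{1}{m}\sum_{i=t}^{t+m-1}l(Z_i)\le pI$ by writing the sum as $\tfrac{\tilde m}{m}\cdot\tfrac{1}{\tilde m}\sum l(\tilde X_i)$ and invoking $\tilde m\le\hat m+C$; it then sets $T_\varepsilon^t=\sup\{m:\tfrac{1}{m}\sum>(1+\varepsilon)I_1\}$, notes that the a.s.\ inequality forces $T_\varepsilon^t<\infty$ a.s., and deduces the conditional probability statement from that. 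You instead bound the running maximum directly by $\max_{K\le C+A_m}\sum_{j\le K}l(\hat X_j)$ and split into two events via a union bound on $\{A_m>(1+\delta)pm\}$ and its complement. Your route buys two things: the optional-skipping step (that the sampled post-change observations remain i.i.d.\ $f_1$ under any causal $\mu$, even conditionally on $\mathcal F_{t-1}$) is stated and justified, whereas the paper uses it silently; and your probability bound is manifestly deterministic and $t$-free, which immediately delivers both the $\esssup$ and the supremum over $t$ needed in \eqref{eq:laicondition_I}, whereas the paper's passage from ``holds for every $t\ge1$'' to $\sup_{t\ge1}$ is asserted without comment. The paper's route is a little shorter on the page.
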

\begin{proof}
We first show that the inequality
\begin{eqnarray}
\frac{1}{m} \sum_{i=t}^{t+m-1} l(Z_{i}) \leq I_{1}, \text{ as } m \rightarrow \infty, \label{eq:asure}
\end{eqnarray}
holds almost surely under $P_{t}^{\mu}$ for any $t \geq 1$.

To show this, we first consider the immediate power allocation $\mu^{*}$, by the strong law of large numbers, we have
\begin{eqnarray}
&&\frac{1}{m} \sum_{i=t}^{t+m-1} \mu_{i} \overset{ a.s.}\rightarrow p, \text{ as } m \rightarrow \infty, \no \\
&&\frac{1}{m} \sum_{i=n}^{n+m-1} l\left(\tilde{X}_{i}\right) \overset{a.s.}\rightarrow I(f_{1},f_{0}), \text{ as } m \rightarrow \infty, \no
\end{eqnarray}
in which $n$ is the quasi change point defined in \eqref{eq:quasi}. Therefore, under $\mu^{*}$, as $ m \rightarrow \infty$, we have
\begin{eqnarray}
\frac{1}{m} \sum_{i=t}^{t+m-1} l(Z_{i}) = \frac{\hat{m}}{m} \frac{1}{\hat{m}} \sum_{i=n}^{n+\hat{m}-1} l\left(\tilde{X}_{i}\right) \overset{a.s.}\rightarrow p I = I_{1}, \label{eq:limit_I}
\end{eqnarray}
where $\hat{m}$ is the number of nonzero elements in $\left\{ \mu_{t}^{*}, \ldots, \mu_{t+m-1}^{*} \right\}$.

For an arbitrary power allocation $\mu$ with $\limsup_{k\rightarrow\infty}\mu_{k}=1$, we always have $\tilde{m} \leq \hat{m}+C$ because of the causal energy constraint, where $\tilde{m}$ denotes the number of nonzero elements in $\left\{ \mu_{t}, \ldots, \mu_{t+m-1} \right\}$. Therefore, as $m \rightarrow \infty$,
\begin{eqnarray}
\frac{1}{m} \sum_{i=t}^{t+m-1} l(Z_{i}) &=& \frac{\tilde{m}}{m} \frac{1}{\tilde{m}} \sum_{i=n}^{n+\tilde{m}-1} l\left(\tilde{X}_{i}\right) \no \\
&\leq& \frac{\hat{m}+C}{m} \frac{1}{\tilde{m}} \sum_{i=n}^{n+\tilde{m}-1} l\left(\tilde{X}_{i}\right) \overset{a.s.} \rightarrow p I. \no
\end{eqnarray}

For the power allocation scheme $\mu$ with $\limsup_{k\rightarrow\infty}\mu_{k}=0$, we have $$\lim_{m\rightarrow\infty}\frac{1}{m} \sum_{i=t}^{t+m-1} l(Z_{i}) = 0 \leq p I.$$
Therefore, inequality \eqref{eq:asure} holds for any arbitrary $\mu$. Notice that i) \eqref{eq:asure} holds in the almost sure sense, since \eqref{eq:limit_I} converges in the almost sure sense; and ii) \eqref{eq:asure} holds for any realization of $Z_{1},\ldots, Z_{t-1}.$

For any $\varepsilon > 0$, define
$$T_{\varepsilon}^{t} = \sup\left\{m\geq 1 \Bigg| \frac{1}{m}\sum_{i=t}^{t+m-1} l(Z_{i}) > (1 + \varepsilon)I_{1} \right\}.$$
Due to \eqref{eq:asure}, we have
$$\essinf P_{t}^{\mu}\{T_{\varepsilon}^{t} < \infty | Z_{1},\ldots,Z_{t-1}\} = 1,$$
which indicates
\begin{eqnarray}
 \lim_{m \rightarrow \infty} \esssup P_{t}^{\mu} \left\{ \frac{1}{m} \max_{0 < q \leq m}  \sum_{i=t}^{t+q} l(Z_{i}) \geq (1+\varepsilon) I_{1} \Bigg| Z_{1},\ldots, Z_{t-1} \right\} \rightarrow 0 \no.
\end{eqnarray}
\end{proof}

Note that Proposition \ref{prop:convergence} holds for every $t \geq 1$, therefore
\begin{eqnarray}
 \lim_{m \rightarrow \infty} \sup_{t \geq 1} \esssup P_{t}^{\mu} \left\{ \frac{1}{m} \max_{0< q \leq m} \sum_{i=t}^{t+q} l(Z_{i}) \geq (1+\varepsilon) I_{1}\Bigg| Z_{1},\ldots, Z_{t-1} \right\} \rightarrow 0. \no\\
\label{eq:laicondition_I}
\end{eqnarray}

To prove Lemma \ref{lem:ext_lowerbound}, we need Theorem 1 in \cite{Lai:TIT:98} , which is restated as follows:
\begin{thm} \label{thm:Lai}
\emph{(\cite{Lai:TIT:98})} Let $\{ Z_{k} \}$ be a random variables sequence with a deterministic but unknown change point $t$. Under probability measure $P_{t}$, the conditional distribution of $Z_{k}$ is $f_{0}(\cdot| \mathbf{Z}_{1}^{k-1})$ for $k<t$ and is $f_{1}(\cdot| \mathbf{Z}_{1}^{k-1})$ for $k \geq t$. Denote $l(Z_{k})$ as
$$ l(Z_{k}) = \log \frac{f_{1}(Z_{k}|\mathbf{Z}_{1}^{k-1})}{f_{0}(Z_{k}|\mathbf{Z}_{1}^{k-1})}.$$
If the condition
\begin{eqnarray}
\lim_{m \rightarrow \infty} \sup_{t \geq 1} \esssup P_{t} \left\{ \max_{0< q \leq m} \sum_{i=t}^{t+q} l(Z_{i}) \geq I_{1}(1+\varepsilon)m \Big| Z_{1},\ldots, Z_{t-1} \right\} \rightarrow 0,  \quad  \forall \varepsilon > 0 \label{eq:necessary_condition}
\end{eqnarray}
holds for some constant $I_{1}$. Then, as $\gamma \rightarrow \infty$,
\begin{eqnarray}
&&\hspace{-6mm}\inf\{ d(\mu, T): \mathbb{E}_{\infty}[T] \geq \gamma \} \no \\
&\geq& \inf\left\{ \sup_{t \geq 1} \mathbb{E}_{t}[T - t | T \geq t]: \mathbb{E}_{\infty}[T] \geq \gamma \right\} \no \\
&\geq& (I_{1}^{-1}+o(1)) \log \gamma. \no
\end{eqnarray}
\end{thm}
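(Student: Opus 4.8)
The plan is to prove the two displayed inequalities separately, the first being essentially definitional and the second carrying all the weight. Throughout I would write $J(T)=\sup_{t\ge1}\mathbb{E}_t[T-t\mid T\ge t]$ for the Pollak-type delay. For the first inequality, note that $\{T\ge t\}=\{T\le t-1\}^{c}\in\mathcal{F}_{t-1}$ for every stopping time $T$, so conditioning on $\{T\ge t\}$ is an average over a sub-collection of the atoms of $\mathcal{F}_{t-1}$; hence $\mathbb{E}_t[(T-t+1)^{+}\mid T\ge t]\le \esssup\mathbb{E}_t[(T-t+1)^{+}\mid\mathcal{F}_{t-1}]=d_t(\mu,T)$. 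Since $(T-t+1)^{+}\ge T-t$ on $\{T\ge t\}$, taking the supremum over $t$ gives $d(\mu,T)\ge J(T)$, which is the first inequality. The real task is to show $\inf\{J(T):\mathbb{E}_\infty[T]\ge\gamma\}\ge (I_1^{-1}+o(1))\log\gamma$.

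For the lower bound I would fix $\varepsilon>0$, set the window length $m=m_\gamma=\big\lfloor \tfrac{1-\varepsilon}{(1+\varepsilon)I_1}\log\gamma\big\rfloor$, and locate a change point at which $T$ is provably slow. The false-alarm budget enters through two elementary sums: on one hand $\sum_{t\ge1}P_\infty\{t\le T<t+m\}=\mathbb{E}_\infty[\min(T,m)]\le m$, and on the other $\sum_{t\ge1}P_\infty\{T\ge t\}=\mathbb{E}_\infty[T]\ge\gamma$. By the mediant inequality the infimum over $t$ of the ratio is at most $m/\gamma$, so there is a change point $t^{*}=t^{*}(\gamma)$ with
\[
P_\infty\{t^{*}\le T<t^{*}+m\}\le \frac{m}{\gamma}\,P_\infty\{T\ge t^{*}\}.
\]
This is the only place $\mathbb{E}_\infty[T]\ge\gamma$ is used, and it deliberately bounds the window probability \emph{relative} to $P_\infty\{T\ge t^{*}\}$ rather than absolutely, so that I never have to assume $P_\infty\{T\ge t^{*}\}$ is bounded away from $0$.

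Next I would change measure on the window. Since $P_{t^{*}}$ and $P_\infty$ agree on $\mathcal{F}_{t^{*}-1}$ and differ only through the likelihood ratio $\exp\big(\sum_{i=t^{*}}^{\cdot}l(Z_i)\big)$ from time $t^{*}$ on, I split $\{t^{*}\le T<t^{*}+m\}$ according to the event $G=\{\max_{0<q\le m}\sum_{i=t^{*}}^{t^{*}+q}l(Z_i)<(1+\varepsilon)I_1 m\}$. On $G$ the stopped likelihood ratio is at most $e^{(1+\varepsilon)I_1 m}$, so $P_{t^{*}}\{t^{*}\le T<t^{*}+m,\,G\}\le e^{(1+\varepsilon)I_1 m}\,P_\infty\{t^{*}\le T<t^{*}+m\}$. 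On $G^{c}$ I would condition on $\mathcal{F}_{t^{*}-1}$; because $\{T\ge t^{*}\}$ is $\mathcal{F}_{t^{*}-1}$-measurable while $G^{c}$ lives in the future, $P_{t^{*}}\{t^{*}\le T<t^{*}+m,\,G^{c}\}\le \delta_m\,P_{t^{*}}\{T\ge t^{*}\}$, where $\delta_m=\sup_{t\ge1}\esssup P_t\{G^{c}\mid Z_1,\dots,Z_{t-1}\}\to0$ is exactly hypothesis \eqref{eq:necessary_condition}. This conditioning step, together with $P_{t^{*}}\{T\ge t^{*}\}=P_\infty\{T\ge t^{*}\}$, is what lets the factor $P_\infty\{T\ge t^{*}\}$ cancel; it is also why the hypothesis must be a \emph{conditional} essential supremum taken \emph{uniformly} over all $t$, since the extracted $t^{*}$ drifts with $\gamma$.

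Combining the two pieces and dividing by $P_{t^{*}}\{T\ge t^{*}\}=P_\infty\{T\ge t^{*}\}$,
\[
P_{t^{*}}\{T<t^{*}+m\mid T\ge t^{*}\}\le e^{(1+\varepsilon)I_1 m}\frac{m}{\gamma}+\delta_m = m\,\gamma^{-\varepsilon}+\delta_m\longrightarrow0
\]
by the choice of $m$. Hence $P_{t^{*}}\{T-t^{*}\ge m\mid T\ge t^{*}\}\to1$, so the Markov bound $\mathbb{E}_{t^{*}}[T-t^{*}\mid T\ge t^{*}]\ge m\,P_{t^{*}}\{T-t^{*}\ge m\mid T\ge t^{*}\}=m(1+o(1))$ gives $J(T)\ge \tfrac{1-\varepsilon}{(1+\varepsilon)I_1}\log\gamma\,(1+o(1))$ uniformly in $T$; letting $\varepsilon\downarrow0$ yields $\inf_T J(T)\ge(I_1^{-1}+o(1))\log\gamma$ and, with the first inequality, completes the proof. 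I expect the main obstacle to be precisely this false-alarm/measure-change coupling: the naive approach, bounding $P_\infty\{t^{*}\le T<t^{*}+m\}$ by $m/\gamma$ absolutely, would force a division by a possibly vanishing $P_\infty\{T\ge t^{*}\}$, and the fix is the relative (mediant) bound together with conditioning $G^{c}$ on the past. A minor technical point to dispatch is the boundary index $q=0$ (the event $\{T=t^{*}\}$), whose contribution is of lower order and can be absorbed by an arbitrarily small adjustment of the window.
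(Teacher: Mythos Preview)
The paper does not prove this statement at all: it is quoted verbatim as Theorem~1 of Lai (1998) and the ``proof'' in the appendix reads only ``Please refer to \cite{Lai:TIT:98}.'' So there is no in-paper argument to compare against. What you have written is a correct reconstruction of Lai's original proof: the first inequality is the standard Lorden-dominates-Pollak observation, and for the second you carry out precisely Lai's change-of-measure argument---pick the window $m=m_\gamma\sim\tfrac{1-\varepsilon}{(1+\varepsilon)I_1}\log\gamma$, use the averaging (mediant) bound to extract a change point $t^*$ at which $P_\infty\{t^*\le T<t^*+m\}\le (m/\gamma)P_\infty\{T\ge t^*\}$, split the window event by whether the running LLR maximum exceeds $(1+\varepsilon)I_1 m$, bound the ``small LLR'' piece by the likelihood-ratio change of measure and the ``large LLR'' piece by the hypothesis, cancel the common factor $P_\infty\{T\ge t^*\}=P_{t^*}\{T\ge t^*\}$, and finish with Markov's inequality. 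Your emphasis on the \emph{relative} bound $P_\infty\{t^*\le T<t^*+m\}\le(m/\gamma)P_\infty\{T\ge t^*\}$ (rather than an absolute $m/\gamma$) and on conditioning $G^c$ through $\mathcal{F}_{t^*-1}$ is exactly the point that makes the argument go through without assuming $P_\infty\{T\ge t^*\}$ is bounded away from zero; this is the same device Lai uses. The indexing caveat you flag about $q=0$ is a notational artifact of how the theorem is transcribed in the paper and not a substantive gap.
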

\begin{proof}
Please refer to \cite{Lai:TIT:98}.
\end{proof}

In our case, for any arbitrary but given power allocation $\mu$, the conditional density
\begin{eqnarray}
f_{0}^{\mu}( Z_{k} | \mathbf{Z}_{1}^{k-1} )
= f_{0}( X_{k} )P\left(\left\{ \mu_{k}=1  \right\} | \mathbf{Z}_{1}^{k-1} \right) + \delta( \phi )P\left(\left\{ \mu_{k}=0  \right\} | \mathbf{Z}_{1}^{k-1} \right),  \no
\end{eqnarray}
where $\delta( \phi )$ is the Dirac delta function. Similarly, we have
\begin{eqnarray}
f_{1}^{\mu}( Z_{k} | \mathbf{Z}_{1}^{k-1} )
= f_{1}( X_{k} )P\left(\left\{ \mu_{k}=1 \right\} | \mathbf{Z}_{1}^{k-1} \right) + \delta( \phi )P\left(\left\{ \mu_{k}=0 \right\} | \mathbf{Z}_{1}^{k-1} \right).  \no
\end{eqnarray}
Therefore, the log likelihood ratio in Theorem \ref{thm:Lai}
$$ l(Z_{k}) =  \log \frac{f_{1}^{\mu}(Z_{k}|\mathbf{Z}_{1}^{k-1})}{f_{0}^{\mu}(Z_{k}|\mathbf{Z}_{1}^{k-1})}
= \left\{ \begin{array}{c c} \log \frac{f_{1}(Z_{k})}{f_{0}(Z_{k})}, &  \text{if } \mu_{k} = 1 \\
0, & \text{if } \mu_{k} = 0  \end{array} , \right. $$
which is consistent with the definition in \eqref{eq:LR}. Moreover, \eqref{eq:laicondition_I} indicates that, for any arbitrary power allocation, \eqref{eq:necessary_condition} holds for the constant $I_{1} = p I$. Therefore, the conclusion in Theorem \ref{thm:Lai} indicates the result for our case:
\begin{eqnarray}
&&\hspace{-6mm}\inf\{ d(\mu, T): \mathbb{E}_{\infty}^{\mu}[T] \geq \gamma \} \no \\
&\geq& \inf\left\{ \sup_{t \geq 1} \mathbb{E}_{t}^{\mu}[T - t | T \geq t]: \mathbb{E}_{\infty}^{\mu}[T] \geq \gamma \right\} \no \\
&\geq& (I_{1}^{-1}+o(1))\log \gamma. \no
\end{eqnarray}

\section{Proof of Lemma \ref{lem:lorden_ext_asym}} \label{apd:lorden_ext_asym}

First, a result similar to Proposition \ref{prop:equilazer} still holds in this case. Specifically,
\begin{prop}
$(\mu^{*}, T_{p})$ is an equalizer rule for (P2), i.e., we have $d_{t}( \mu^{*}, T_{p}) = d_{1}( \mu^{*}, T_{p}), \forall t \geq 1$.
\end{prop}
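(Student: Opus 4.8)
The plan is to repeat, essentially verbatim, the argument already used for Proposition~\ref{prop:equilazer}, observing that the delay functional $d_{t}(\mu,T)$ that defines the equalizer property is identical in (P1) and (P2). The two formulations differ only in their ARL constraint ($\mathbb{E}_{\infty}[N]\geq\eta$ versus $\mathbb{E}_{\infty}^{\mu}[T]\geq\gamma$), and this difference only fixes a different value of the CUSUM threshold $B$; it does not enter the structural argument that establishes the equalizer property. So I would not derive anything genuinely new for (P2), but rather make explicit that being an equalizer is a property of the pair $(\mu^{*},T_{p})$ that holds for \emph{every} threshold $B$, and in particular for the $B$ selected to meet the system-level constraint $\mathbb{E}_{\infty}^{\mu^{*}}[T_{p}]\geq\gamma$.

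First I would recall that under the immediate allocation $\mu^{*}=\nu$ the sampling indicators $\{\mu_{k}^{*}\}$ are i.i.d.\ (being a copy of the Bernoulli energy process), so that the observation sequence $\{Z_{k}\}$ is conditionally i.i.d.\ given the change point $t$. Consequently the CUSUM recursion $S_{k}=\max[S_{k-1},1]L(Z_{k})$, and hence $W_{k}=\max[S_{k},1]$, is driven by increments that are i.i.d.\ conditioned on $t$, so that $\{W_{k}\}$ is a homogeneous Markov chain. Next I would exploit the monotonicity of Page's rule: on the event $\{T_{p}\geq t\}$ the post-$t$ evolution of the statistic depends on the past only through $W_{t-1}$, and a larger value of $W_{t-1}$ can only shorten the remaining time to cross $B$; hence $T_{p}$ is a non-increasing function of $W_{t-1}$ on this event. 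This shows that the essential supremum over $\mathcal{F}_{t-1}$ in the definition \eqref{eq:tdd} of $d_{t}(\mu^{*},T_{p})$ is attained at the smallest admissible value $W_{t-1}=1$ (the reset state), giving
\begin{eqnarray}
d_{t}(\mu^{*}, T_{p}) = \mathbb{E}_{t}^{\mu^{*}}\left[ T_{p} - t + 1 \mid W_{t-1} = 1 \right]. \no
\end{eqnarray}
Finally, by the homogeneity of $\{W_{k}\}$, the conditional expected delay started from the reset state $W_{t-1}=1$ is independent of $t$, so $d_{t}(\mu^{*},T_{p})=d_{1}(\mu^{*},T_{p})$ for every $t\geq 1$.

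I expect the only genuine obstacle to be the rigorous justification that the esssup is attained at $W_{t-1}=1$, i.e.\ the monotonicity claim that $T_{p}$ is non-increasing in the starting value of the CUSUM statistic; everything else is an immediate consequence of the conditional i.i.d.\ structure and the homogeneity of the Markov chain $\{W_{k}\}$. Since this monotonicity is precisely the fact underlying the classical proof that Page's rule is an equalizer for Lorden's problem~\cite{Poor:Book:08}, I would invoke that argument rather than re-prove it, and simply note that the causal energy constraint plays no role here because $\mu^{*}=\nu$ makes $\{\mu_{k}^{*}\}$ a stationary i.i.d.\ process independent of the observation history.
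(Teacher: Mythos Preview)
Your proposal is correct and mirrors the paper's own treatment: the paper simply states that the proof is similar to that of Proposition~\ref{prop:equilazer} and omits the details, which is exactly your observation that the equalizer property of $(\mu^{*},T_{p})$ depends only on the conditional i.i.d.\ structure of $\{Z_k\}$ under $\mu^{*}=\nu$ and the homogeneity/monotonicity of the CUSUM chain, not on which ARL constraint fixes the threshold $B$.
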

\begin{proof}
This proof is similar to that of Proposition \ref{prop:equilazer}. Hence, we omit the proof for brevity.
\end{proof}

The rest of the proof can be shown by discussing the relationship between CUSUM and one sided SPRT. Denote SPRT statistic as
\begin{eqnarray}
\Lambda_{1:k} = \prod_{i=1}^{k} L(Z_{i}), \label{eq:SPRT}
\end{eqnarray}
and the stopping time as
$$ T_{s,1} = \inf\left\{k \geq 1 | \Lambda_{1:k} \geq B \right\}.$$
Since the CUSUM statistic
$$ S_{k} = \max_{1 \geq q \geq k} \left[ \prod_{i=q}^{k}L(Z_{i})\right] \geq \prod_{i=1}^{k}L(Z_{i}) = \Lambda_{1:k},$$ we always have
$$\mathbb{E}_{1}^{\mu^{*}}[T_{p}] \leq \mathbb{E}_{1}^{\mu^{*}}[T_{s,1}].$$
By the performance of SPRT (Proposition 4.11 in \cite{Poor:Book:08}), we have
$$ \mathbb{E}_{1}^{\mu^{*}}[T_{s, 1}] \sim \frac{|\log \gamma|}{p I}. $$
Noting that $d(\mu^{*}, T_{p}) = d_{1}( \mu^{*}, T_{p}) = \mathbb{E}_{1}^{\mu^{*}}[T_{p}]$ and using Lemma \ref{lem:ext_lowerbound}, we have
$$ d(\mu^{*}, T_{p}) \sim  \frac{1}{p} \frac{|\log \gamma|}{I}.$$

Moreover, by (10) in Theorem 2 of \cite{Lorden:AmS:71}, the threshold $B=\gamma$ will guarantee
$$ \mathbb{E}_{\infty}^{\mu^{*}}[T_{p}] \geq \gamma. $$
The proof is complete.

\section{Proof of Lemma \ref{lem:pollak_ext_asym}} \label{apd:pollak_ext_asym}
We again consider the one sided SPRT with the threshold $B= \gamma$, which will guarantee
$\mathbb{E}_{\infty}^{\mu^{*}}(T_{p}) \geq \gamma.$

Let $T_{s,t}$ denote the stopping time of SPRT starting at time instant $t$, i.e.,
\begin{eqnarray}
T_{s,t} = \inf \left\{ m \geq 1 \Bigg| \prod_{i=t}^{t+m-1} L(Z_{i}) \geq B \right\}, \no
\end{eqnarray}
then Page's stopping time can be written as
\begin{eqnarray}
T_{p} = \inf \left\{ T_{s,t}+t-1 | t = 1,2,\ldots \right\}. \label{eq:SPRT_CUSUM}
\end{eqnarray}

Note that
\begin{eqnarray}
\left\{ T_{p} < t \right\} = \left\{ T_{s, 1} < t \right\} \cup \ldots \cup \left\{ T_{s, t-1} < 1 \right\} \in \mathcal{F}_{t-1},\no
\end{eqnarray}
therefore,
$$\left\{ T_{p} \geq t \right\} \in \mathcal{F}_{t-1}. $$
Then, for an arbitrary $t$,
\begin{eqnarray}
\hspace{-3mm} \mathbb{E}_{t}^{\mu^{*}}\left[ T_{p}-t | T_{p} \geq t \right] &\overset{(a)}\leq& \mathbb{E}_{t}^{\mu^{*}} \left[ T_{s, t}-1 | T_{p} \geq t \right] \no \\
&\overset{(b)}=& \mathbb{E}_{t}^{\mu^{*}} \left[ T_{s, t}\right]-1 \no \\
&\overset{(c)}=&\mathbb{E}_{1}^{\mu^{*}} \left[ T_{s, 1}\right]-1. \no
\end{eqnarray}
Here, (a) is due to \eqref{eq:SPRT_CUSUM}, (b) is due to the fact that $T_{s, t}$ is independent of $\mathcal{F}_{t-1}$, and (c) is true because $\{ Z_{k} \}$'s are conditionally i.i.d. under $\mu^{*}$.

From Appendix~\ref{apd:lorden_ext_asym}, we have
$$ \mathbb{E}_{1}^{\mu^{*}}[T_{s, 1}] \sim \frac{|\log \gamma|}{p I}. $$

Combining this with Lemma \ref{lem:ext_lowerbound}, we have
$$ \sup_{t \geq 1} \mathbb{E}_{t}^{\mu^{*}}\left[ T_{p}-t | T_{p} \geq t \right] \sim \frac{1}{p} \frac{|\log \gamma|}{I}. $$

\section{Proof of Lemma \ref{lem:general_lowerbound}} \label{apd:general_lowerbound}
We first have the following supporting proposition.
\begin{prop}
$\mathbb{E}^{\nu}[\tilde{\mu}^{*}]$ exists, and $0 < \mathbb{E}^{\nu}[\tilde{\mu}^{*}] \leq 1$.
\end{prop}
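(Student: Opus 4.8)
The plan is to identify $\tilde p=\mathbb E^\nu[\tilde\mu^*]$ with the long‑run fraction of slots in which a sample is taken and to read it off from the stationary behaviour of the battery process. First I would observe that under the generalized immediate allocation $\tilde\mu_k^*=\mathbf 1\{E_{k-1}+\nu_k\ge 1\}$ is a deterministic function of the pair $(E_{k-1},\nu_k)$, so the recursion $E_k=\min\{C,E_{k-1}+\nu_k-\tilde\mu_k^*\}$ makes $\{E_k\}$ a time‑homogeneous Markov chain on the finite state space $\{0,1,\dots,C\}$ driven by the i.i.d.\ arrivals $\{\nu_k\}$, with transition probabilities determined by the pmf $\{p_i\}$. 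We take as implicit in the model that energy arrives with positive probability, i.e.\ $p_0=P^\nu(\nu_k=0)<1$ (if $p_0=1$ no energy ever arrives and the problem is vacuous; if $p_0=0$ then $\tilde\mu_k^*\equiv 1$ and $\tilde p=1$ trivially). One then checks that, started from the empty battery $E_0=0$, the chain is confined to a single finite closed irreducible class: from any state $j\ge1$ a run of $j$ empty arrivals (probability $p_0^{\,j}>0$) drives $E_k$ to $0$, while from $0$ the chain reaches exactly the states the arrival law permits. Hence the chain has a unique stationary distribution $\tilde{\mathbf w}=[\tilde w_0,\dots,\tilde w_C]^T$ on this class, and the ergodic theorem for finite Markov chains gives that $\frac1m\sum_{k=1}^m\tilde\mu_k^*$ converges $P^\nu$‑a.s.\ and in $L^1$ to a constant; this constant is $\tilde p=\mathbb E^\nu[\tilde\mu^*]=\mathbb E_{\tilde{\mathbf w}}[\tilde\mu_1^*]$, which settles existence.

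Next I would compute $\tilde p$ in closed form. Because $E_{k-1}$ is a function of $\nu_1,\dots,\nu_{k-1}$ alone, it is independent of $\nu_k$, and no observation is taken at slot $k$ exactly when the battery is empty and nothing arrives, i.e.\ $\{\tilde\mu_k^*=0\}=\{E_{k-1}=0\}\cap\{\nu_k=0\}$. Evaluating under $\tilde{\mathbf w}$ and using this independence gives $1-\tilde p=P(E_{k-1}=0)\,P^\nu(\nu_k=0)=\tilde w_0 p_0$, i.e.\ $\tilde p=1-p_0\tilde w_0$.

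The asserted bounds then follow at once. Since $\tilde\mu_k^*\in\{0,1\}$ we have $\tilde p=\mathbb E^\nu[\tilde\mu^*]\le 1$ (equivalently $p_0\tilde w_0\ge0$). For strict positivity, $\tilde w_0\le 1$ gives $p_0\tilde w_0\le p_0<1$, hence $\tilde p=1-p_0\tilde w_0>0$.

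I expect the only real obstacle to be the first step, namely arguing that $\mathbb E^\nu[\tilde\mu^*]$ is a genuine constant rather than a sample‑path‑dependent random limit; this is handled by checking that the battery chain, started empty, lives in one finite closed irreducible class and then invoking the ergodic theorem. The explicit formula $\tilde p=1-p_0\tilde w_0$ and the two inequalities are routine, and the strict positivity $\tilde p>0$ is exactly what is needed for the bound $|\log\gamma|(\tilde pI)^{-1}$ in Lemma~\ref{lem:general_lowerbound} to be finite.
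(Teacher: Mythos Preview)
Your proposal is correct and follows essentially the same approach as the paper: both show that $\{E_k\}$ is a finite-state (regular) Markov chain under $\tilde\mu^*$, identify its stationary distribution $\tilde{\mathbf w}$, and compute $\tilde p=1-p_0\tilde w_0$ via the independence of $E_{k-1}$ and $\nu_k$. If anything, your argument is slightly more careful---you spell out the irreducibility via returns to $0$ and actually verify the strict inequality $\tilde p>0$ from $p_0<1$, whereas the paper's proof only states $0\le \mathbb E^\nu[\tilde\mu^*]\le 1$.
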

\begin{proof}
We show that $E_{k}$ is a regular Markov chain with a finite number of states. It is easy to see that $E_{k}$ have only $C+1$ possible states. If at the end of the previous time slot, the battery has zero energy left, then the transition probability is given as
\begin{eqnarray}
&& P^{\nu}(E_{k+1} = 0 | E_{k}=0 ) = p_{0} + p_{1}, \no \\
&& P^{\nu}(E_{k+1} = j-1 | E_{k}=0 ) = p_{j}, \text{ for } 1 < j \leq C \no\\
&& P^{\nu}(E_{k+1} = C | E_{k}=0 ) = \sum_{j=C+1}^{\infty}p_{j}. \no
\end{eqnarray}
If at the end of the previous time slot, the sensor has $i ( 1 \leq i \leq C)$ units of energy left, the transition probability is given as
\begin{eqnarray}
&& P^{\nu}(E_{k+1} = i-1 | E_{k}=i ) = p_{0}, \no \\
&& P^{\nu}(E_{k+1} = i+j-1 | E_{k}=i ) = p_{j}, \text{ for } 1 \leq j \leq C-i \no \\
&& P^{\nu}(E_{k+1} = C | E_{k}=i ) = \sum_{j=C-i+1}^{\infty}p_{j}. \no
\end{eqnarray}
The above transition probability indicates $E_{k}$ is a regular Markov chain. We denote the stationary distribution as $\tilde{\mathbf{w}}=[\tilde{w}_{0}, \tilde{w}_{1}, \ldots, \tilde{w}_{C}]^{T}$, where $\tilde{w}_{i}$ is the stationary probability for the state $E_{k}=i$, then we have
\begin{eqnarray}
\mathbb{E}^{\nu}[\tilde{\mu}^{*}_{k}] &=& P^{\nu}\left[ \tilde{\mu}^{*}_{k}=1 \right] \no \\
&=& 1-P^{\nu}\left[ \tilde{\mu}^{*}_{k}=0 \right] \no \\
&=& 1-P^{\nu}\left[ \nu_{k}=0\right]P^{\nu}\left[E_{k-1}=0\right] \no \\
&=& 1-p_{0}\tilde{w}_{0} \quad \text{ as } k \rightarrow \infty \no
\end{eqnarray}
exists, and $0 \leq \mathbb{E}^{\nu}[\tilde{\mu}^{*}_{k}] \leq 1$.
\end{proof}

We denote $\tilde{p} = \mathbb{E}^{\nu}[\tilde{\mu}^{*}]$. The rest of the proof follows the one in Appendix \ref{apd:ext_lowerbound} by replacing $p$ with $\tilde{p}$.

\section{Proof of Lemma \ref{lem:general_asym}} \label{apd:general_asym}
We first prove the asymptotic optimality of $(\tilde{\mu}^*,T_p)$ for problem (P2). The proof relies on some supporting propositions and Theorem 4 of \cite{Lai:TIT:98}.

\begin{prop} \label{prop:laicondition2}
For the power allocation scheme $\tilde{u}^{*}$, we have
\begin{eqnarray}
\lim_{m \rightarrow \infty} \sup_{k \geq t \geq 1} \esssup P_{t}^{\tilde{\mu}^{*}} \left\{\frac{1}{m} \sum_{i=k}^{k+m} l(Z_{i}) \leq \tilde{p}I-\delta \Bigg| Z_{1}, \ldots, Z_{k-1} \right\} \rightarrow 0 \quad \forall \delta > 0.
\end{eqnarray}
\end{prop}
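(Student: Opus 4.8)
The plan is to collapse the triple supremum (over the change point $t$, over $k\ge t$, and the essential supremum over the conditioning $Z_1^{k-1}$) into a finite maximum over battery states, and then to invoke a single almost-sure law of large numbers for the post-change log-likelihood-ratio process driven by $\tilde\mu^{*}$ --- the same limit that underlies Proposition~\ref{prop:convergence}, now read in the ``from below'' direction.

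First I would record the Markov structure. Under $\tilde\mu^{*}$ one has $\mu_k=\tilde\mu_k^{*}=\mathbf{1}\{E_{k-1}+\nu_k\ge 1\}$, $E_k=\min\{C,E_{k-1}+\nu_k-\mu_k\}$, and $\mu_k=1\iff Z_k\ne\phi$, so $\mu_1^{k-1}$ is $\sigma(Z_1^{k-1})$-measurable. Since $\{\nu_i\}$ and $\{X_i\}$ are i.i.d.\ and mutually independent, for $t\le k$ the block $(Z_k,Z_{k+1},\dots)$ is, conditionally on the value of $E_{k-1}\in\{0,1,\dots,C\}$, independent of $Z_1^{k-1}$, and its conditional law depends on the past only through $E_{k-1}$ and not on $t$ or $k$. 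Hence, for every $t\le k$ and every $\delta>0$,
\[
\esssup P_t^{\tilde\mu^{*}}\!\left\{\tfrac{1}{m}\textstyle\sum_{i=k}^{k+m}l(Z_i)\le \tilde pI-\delta \,\Big|\, Z_1^{k-1}\right\}\;\le\;\max_{0\le e\le C} q_m(e,\delta),
\]
where $q_m(e,\delta)$ is the probability of the same event for the process started at the change point with $E_0=e$ (that is, $\{\nu_i\}$ i.i.d.\ with pmf $\{p_j\}$, $\{X_i\}$ i.i.d.\ $\sim f_1$, $\mu_i=\tilde\mu_i^{*}$). The right-hand side is free of $t$ and $k$, so it suffices to prove $q_m(e,\delta)\to 0$ as $m\to\infty$ for each fixed $e$.

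For fixed $e$, let $\tilde m_m=\sum_{i=1}^{m}\mu_i$ be the number of non-trivial observations in the first $m$ slots and $\{\tilde X_j\}$ the non-trivial subsequence, which is i.i.d.\ $\sim f_1$ and independent of $\tilde m_m$ (a function of $\{\nu_i\}$ only --- the same Wald-type independence used in Appendix~\ref{apd:inner_performance}), so $\sum_{i=1}^{m}l(Z_i)=\sum_{j=1}^{\tilde m_m}l(\tilde X_j)$. The strong law of large numbers for the regular (finite, irreducible, aperiodic) chain $E_k$ established in Appendix~\ref{apd:general_lowerbound}, applied to the bounded functional $\mu_k$ of $(E_{k-1},\nu_k)$, gives $\tilde m_m/m\overset{a.s.}{\rightarrow}\tilde p$, with $\tilde p=1-p_0\tilde w_0>0$, regardless of the starting state $e$; in particular $\tilde m_m\overset{a.s.}{\rightarrow}\infty$. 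Combining this with the strong law of large numbers $\frac1n\sum_{j=1}^{n}l(\tilde X_j)\overset{a.s.}{\rightarrow}\mathbb{E}_1[l(\tilde X)]=I$ (finite) and composing yields $\frac1m\sum_{i=1}^{m}l(Z_i)=\frac{\tilde m_m}{m}\cdot\frac{1}{\tilde m_m}\sum_{j=1}^{\tilde m_m}l(\tilde X_j)\overset{a.s.}{\rightarrow}\tilde pI$; the single extra summand $l(Z_{k+m})$ out of $m+1$ contributes $o(m)$ and is irrelevant. Hence $q_m(e,\delta)\to 0$ for every $\delta>0$, and taking the maximum over the $C+1$ values of $e$ completes the argument.

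The step I expect to be the main obstacle is the reduction in the second paragraph: $\sigma(Z_1^{k-1})$ does \emph{not} determine $E_{k-1}$ (it pins down $\mu_1^{k-1}$ but not $\nu_1^{k-1}$), so one cannot simply condition on a fixed battery level. The resolution is exactly that $E_{k-1}$ ranges over the finite set $\{0,\dots,C\}$ and that, given its value, the future observation block is independent of the past with a law identical for every $(t,k)$ with $t\le k$; this is what turns the essential supremum and the suprema over $t$ and $k$ into a maximum over at most $C+1$ initial states and supplies the uniformity. The remaining ingredients --- the ergodic theorem for finite chains, the SLLN, and the independence of $\{X_i\}$ from $\{\nu_i\}$ --- are standard and already appear elsewhere in the paper.
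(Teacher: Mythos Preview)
Your proposal is correct and rests on the same almost-sure convergence $\tfrac{1}{m}\sum_{i=k}^{k+m-1}l(Z_i)\to\tilde p I$ that the paper invokes (by appealing back to the argument of Proposition~\ref{prop:convergence}, with the ergodic theorem for the regular finite chain $E_k$ standing in for the i.i.d.\ strong law). The difference is in how the uniformity over $k\ge t\ge1$ and over the conditioning $Z_1^{k-1}$ is handled. The paper's proof simply asserts that because the conditional probability tends to zero ``for every $k\ge t$'' the supremum does as well, without explaining why a pointwise limit survives the supremum. You supply that missing step: since the conditional law of the future block $(Z_k,Z_{k+1},\dots)$ under $P_t^{\tilde\mu^*}$ depends on $Z_1^{k-1}$ only through the posterior distribution of $E_{k-1}\in\{0,\dots,C\}$, the essential supremum is bounded by $\max_{0\le e\le C} q_m(e,\delta)$, a maximum of $C+1$ sequences each tending to zero. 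Your version therefore closes a small logical gap in the paper's own argument at the cost of only a few lines, and it makes transparent why the finiteness of the battery capacity is exactly what delivers the uniformity in $t$ and $k$.
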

\begin{proof}
As we have shown in Proposition \ref{prop:convergence}, for any realization of $Z_{1}, \ldots, Z_{k-1}$, and $\forall k \geq t$, under the power allocation scheme $\tilde{\mu}^{*}$, we have
\begin{eqnarray}
\frac{1}{m} \sum_{i=k}^{k+m-1} l(Z_{i}) \overset{a.s.}\rightarrow \tilde{p} I, \quad m \rightarrow \infty. \no
\end{eqnarray}
Then
\begin{eqnarray}
\lim_{m \rightarrow \infty} \esssup P_{t}^{\tilde{\mu}^{*}}\left\{ \Bigg| \frac{1}{m} \sum_{i=k}^{k+m} l(Z_{i}) - \tilde{p}I \Bigg| \geq \delta \Bigg| Z_{1},\ldots,Z_{k-1} \right\} \rightarrow 0 \quad \forall \delta > 0, \no
\end{eqnarray}
for all $k \geq t$. Therefore
\begin{eqnarray}
\lim_{m \rightarrow \infty} \esssup P_{t}^{\tilde{\mu}^{*}}\left\{ \frac{1}{m} \sum_{i=k}^{k+m} l(Z_{i}) \leq \tilde{p}I - \delta \Bigg| Z_{1},\ldots,Z_{k-1} \right\} \rightarrow 0 \no
\end{eqnarray}
because the above the expression holds for every $k \geq t$. Then the proposition follows.
\end{proof}

\begin{prop} \label{prop:laicondition3}
Under the power allocation scheme $\tilde{\mu}^{*}$, Page's stopping time $T_{p}$ satisfies
\begin{eqnarray}
\sup_{k \geq 1} P_{\infty}^{\tilde{\mu}^{*}}(k \leq T_{p} < k+m_{\alpha}) \leq \alpha,
\end{eqnarray}
where
\begin{eqnarray}
\liminf \frac{m_{\alpha}}{|\log \alpha|} > (\tilde{p}I)^{-1}, \no
\end{eqnarray}
but
\begin{eqnarray}
\log m_{\alpha} = o(\log \alpha) \text{ as } \alpha \rightarrow 0. \no
\end{eqnarray}
\end{prop}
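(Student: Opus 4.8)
My plan is to reduce the event $\{k\leq T_{p}<k+m\}$ to events about one-sided SPRT statistics, and then exploit the martingale structure of the likelihood ratio under $P_{\infty}^{\tilde{\mu}^{*}}$. Using the representation $T_{p}=\inf\{T_{s,t}+t-1:t\geq1\}$ from \eqref{eq:SPRT_CUSUM}, I would write $\Lambda_{t:j}=\prod_{i=t}^{j}L(Z_{i})$ and $\rho_{t}=\inf\{j\geq t:\Lambda_{t:j}\geq B\}=T_{s,t}+t-1$, so that $T_{p}=\min_{t\geq1}\rho_{t}$. Since $\{T_{p}<k+m\}\subseteq\bigcup_{t=1}^{k+m-1}\{\rho_{t}<k+m\}$ while $\rho_{t}\geq T_{p}\geq k$ on $\{T_{p}\geq k\}$, a union bound gives, uniformly in $k$,
\begin{eqnarray}
P_{\infty}^{\tilde{\mu}^{*}}(k\leq T_{p}<k+m)\leq\sum_{t=1}^{k+m-1}P_{\infty}^{\tilde{\mu}^{*}}(k\leq\rho_{t}<k+m). \no
\end{eqnarray}
The first observation I would record is that $(\Lambda_{t:j})_{j\geq t}$ is a nonnegative mean-one $P_{\infty}^{\tilde{\mu}^{*}}$-martingale --- since $\mu_{i}$ is a function of $Z_{1}^{i-1},\nu_{1}^{i},\mu_{1}^{i-1}$, all independent of $X_{i}$, so $\mathbb{E}_{\infty}[L(Z_{i})\mid\text{past}]=1$ whether or not a sample is taken --- so Ville's maximal inequality yields $P_{\infty}^{\tilde{\mu}^{*}}(\sup_{j}\Lambda_{t:j}\geq B)\leq1/B$.

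For the $m$ ``recent'' indices $t\in[k,k+m-1]$, where $\rho_{t}\geq t\geq k$ automatically, each summand is at most $P_{\infty}^{\tilde{\mu}^{*}}(\rho_{t}<\infty)\leq1/B$, so these contribute at most $m/B$. The $k-1$ ``old'' indices $t\leq k-1$ are the crux, since their number grows with $k$. For these I would condition on $\mathcal{G}_{k-1}=\sigma(Z_{1}^{k-1},\nu_{1}^{k-1},\mu_{1}^{k-1})$: on $\{\rho_{t}\geq k\}$ one has $\Lambda_{t:k-1}<B$, and since $\Lambda_{t:j}=\Lambda_{t:k-1}\Lambda_{k:j}$ for $j\geq k$, the event $\{k\leq\rho_{t}<k+m\}$ forces $\Lambda_{k:j}\geq B/\Lambda_{t:k-1}$ for some $j\geq k$; applying the maximal inequality to $(\Lambda_{k:j})_{j\geq k}$ conditionally on $\mathcal{G}_{k-1}$ gives $P_{\infty}^{\tilde{\mu}^{*}}(k\leq\rho_{t}<k+m\mid\mathcal{G}_{k-1})\leq(\Lambda_{t:k-1}/B)\mathbf{1}\{\rho_{t}\geq k\}$. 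Taking expectations and using Wald's likelihood-ratio identity (that $\Lambda_{t:k-1}$ is the Radon--Nikodym derivative $dP_{t}^{\tilde{\mu}^{*}}/dP_{\infty}^{\tilde{\mu}^{*}}$ on $\mathcal{G}_{k-1}$), the $t$-th old summand is bounded by $\tfrac{1}{B}P_{t}^{\tilde{\mu}^{*}}(\rho_{t}\geq k)=\tfrac{1}{B}P_{t}^{\tilde{\mu}^{*}}(T_{s,t}\geq k-t+1)$.

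Summing over $t\leq k-1$ and re-indexing by the SPRT sample size, $\sum_{t=1}^{k-1}P_{t}^{\tilde{\mu}^{*}}(T_{s,t}\geq k-t+1)\leq\sum_{r\geq1}\sup_{t\geq1}P_{t}^{\tilde{\mu}^{*}}(T_{s,t}\geq r)$; the right side is the expected one-sided SPRT sample size on $\{Z_{k}\}$ under the post-change regime, uniform over the finitely many states of the energy chain, and it is $\leq\tfrac{\log B}{\tilde{p}I}(1+o(1))$ as $B\to\infty$ by the SPRT analysis already used in Appendix~\ref{apd:lorden_ext_asym} (the extension of Proposition~4.11 of \cite{Poor:Book:08}, together with Proposition~\ref{prop:laicondition2}). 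Combining the two contributions,
\begin{eqnarray}
\sup_{k\geq1}P_{\infty}^{\tilde{\mu}^{*}}(k\leq T_{p}<k+m)\leq\frac{1}{B}\left(m+\frac{\log B}{\tilde{p}I}(1+o(1))\right). \no
\end{eqnarray}
To finish I would choose parameters: fix $\varepsilon>0$, set $m_{\alpha}=\lceil(1+\varepsilon)|\log\alpha|/(\tilde{p}I)\rceil$, and take the threshold $B=B_{\alpha}\to\infty$ large enough (e.g.\ $B_{\alpha}=\alpha^{-2}$) that the right side is $\leq\alpha$ for all small $\alpha$, which is possible since $m_{\alpha}/B_{\alpha}$ and $(\log B_{\alpha})/(B_{\alpha}\tilde{p}I)$ are both $o(\alpha)$. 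By construction $\liminf_{\alpha\to0}m_{\alpha}/|\log\alpha|=(1+\varepsilon)/(\tilde{p}I)>(\tilde{p}I)^{-1}$, while $\log m_{\alpha}=\log|\log\alpha|+O(1)=o(\log\alpha)$, which is the claim; in the application to Lemma~\ref{lem:general_asym}, where $B$ is instead pinned to the ARL constraint via $B\asymp\gamma$ and $\alpha=\alpha_{\gamma}\asymp(\log\gamma)/\gamma$, the same estimates go through with $\gamma$ in place of $B_{\alpha}$.

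I expect the only real obstacle to be the union over the growing number of ``old'' starting points: bounding each by $1/B$ would give the useless total $(k-1)/B$. The point is the change-of-measure step, which turns $\sum_{t}\mathbb{E}_{\infty}^{\tilde{\mu}^{*}}[\mathbf{1}\{\rho_{t}\geq k\}\Lambda_{t:k-1}]$ into $\sum_{t}P_{t}^{\tilde{\mu}^{*}}(\rho_{t}\geq k)$, a convergent series equal to an expected SPRT sample size and hence only $O(\log B)$. A minor technical point is a tail bound for the SPRT sample size that is uniform over the initial state of the energy chain, which should follow from the exponential convergence rate underlying Propositions~\ref{prop:convergence} and~\ref{prop:laicondition2}.
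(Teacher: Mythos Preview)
Your argument is correct, but it takes a genuinely different and longer route than the paper. The paper decomposes over the \emph{value} of the stopping time,
\[
P_{\infty}^{\tilde{\mu}^{*}}(k\leq T_{p}<k+m_{\alpha})=\sum_{\hat{k}=k}^{k+m_{\alpha}-1}P_{\infty}^{\tilde{\mu}^{*}}(T_{p}=\hat{k}),
\]
which has only $m_{\alpha}$ summands; each is bounded by $P_{\infty}^{\tilde{\mu}^{*}}(S_{\hat{k}}\geq B)$, and after passing to the non-trivial observation sequence (i.i.d.\ under $P_{\infty}$) and \emph{reversing time}, Doob's maximal inequality gives $1/B$ per term, hence the clean bound $m_{\alpha}/B$. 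With this the choice $m_{\alpha}=|\log\alpha|/(\tilde{p}I)+\delta$ and $B=m_{\alpha}/\alpha$ finishes the proof in one line. By contrast, you decompose over the SPRT \emph{starting times} $t$, which produces $k+m-1$ terms; the $k-1$ ``old'' terms cannot be bounded by $1/B$ each, and you rescue them with a conditional Ville inequality plus a change of measure $\mathbb{E}_{\infty}[\mathbf{1}\{\rho_{t}\geq k\}\Lambda_{t:k-1}]=P_{t}(\rho_{t}\geq k)$, then sum the resulting SPRT tails to get an extra $O((\log B)/B)$ term. That extra term is harmless, so the final estimate matches, but the paper's time-reversal shortcut sidesteps the change-of-measure step entirely and avoids the technical loose end you note at the end (the SPRT tail bound uniform over the battery's initial state). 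What your approach buys is that it never appeals to time-reversal or the i.i.d.\ structure of the non-trivial subsequence; it would extend more readily to settings where that reversal is unavailable.
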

\begin{proof}
For any $k$,
\begin{eqnarray}
&&\hspace{-6mm} P_{\infty}^{\tilde{\mu}^{*}}(k \leq T_{p} < k+m_{\alpha}) \no\\
&=& \sum_{\hat{k}=k}^{k+m_{\alpha}-1} P_{\infty}^{\tilde{\mu}^{*}}(T_{p} = \hat{k}) \no\\
&\leq& \sum_{\hat{k}=k}^{k+m_{\alpha}-1}P_{\infty}^{\tilde{\mu}^{*}}\left\{ \prod_{i=\hat{k}-j}^{\hat{k}} L(Z_{i}) \geq B, \exists 0 \leq j \leq \hat{k}-1 \right\} \no\\
&\overset{(a)}=& \sum_{\hat{k}=k}^{k+m_{\alpha}-1}P_{\infty}\left\{ \prod_{i^{\prime}=k^{\prime}-j^{\prime}}^{k^{\prime}} L(\tilde{X}_{i^{\prime}}) \geq B, \exists 0 \leq j^{\prime} \leq k^{\prime}-1 \right\} \no \\
&\overset{(b)}=& \sum_{\hat{k}=k}^{k+m_{\alpha}-1}P_{\infty}\left\{ \prod_{i^{\prime}=1}^{k^{\prime \prime}} L(\tilde{X}_{i^{\prime}}) \geq B, \exists 0 \leq k^{\prime \prime} \leq k^{\prime} \right\} \no \\
&\overset{(c)}\leq& \sum_{\hat{k}=k}^{k+m_{\alpha}-1} \exp(-\log B) \no\\
&=& m_{\alpha} \exp( - \log B). \label{eq:m_h_a}
\end{eqnarray}
Here, (a) is true because the likelihood ratio of $\left\{ Z_{i} \right\}$ and that of $\left\{ \tilde{X}_{i} \right\}$ are the same. Then we substitute $\left\{ Z_{i} \right\}$ with $\left\{ \tilde{X}_{i} \right\}$, and change the probability measure correspondingly. $i^{\prime}, k^{\prime}$ and $j^{\prime}$ are the new indices in $\left\{ \tilde{X}_{i} \right\}$ corresponding to the original $i$, $\hat{k}$ and $j$ in $\left\{ Z_{i}\right\}$. (b) holds because under $P_{\infty}$, $\left\{ \tilde{X}_{i} \right\}$ are i.i.d., then we reverse the sequence. (c) is due to Doob's martingale inequality, since under $P_{\infty}$, $\left\{ L(\tilde{X}_{i}) \right\}$ is a martingale with expectation $1$.

By \eqref{eq:m_h_a}, we can simply choose $m_{\alpha} = |\log \alpha|(\tilde{p}I)^{-1}+\delta$, and choose $B$, the threshold of CUSUM, such that $m_{\alpha} \exp( - \log B) = \alpha$.
\end{proof}

To prove Lemma \ref{lem:general_asym}, we need Theorem 4 ii) of \cite{Lai:TIT:98} , which is restated as follows:
\begin{thm} \label{thm:Lai2}
\emph{(\cite{Lai:TIT:98})} Let $\{ Z_{k} \}$ be a random variables sequence with a deterministic but unknown change point $t$. Under probability measure $P_{t}$, the conditional distribution of $Z_{k}$ is $f_{0}(\cdot| \mathbf{Z}_{1}^{k-1})$ for $k<t$ and is $f_{1}(\cdot| \mathbf{Z}_{1}^{k-1})$ for $k \geq t$. Denote $l(Z_{k})$ as
$$ l(Z_{k}) =\log\frac{f_{1}(Z_{k}|\mathbf{Z}_{1}^{k-1})}{f_{0}(Z_{k}|\mathbf{Z}_{1}^{k-1})}.$$
Denote $e^c$ as the threshold used in Page's stopping time.
Then
\begin{eqnarray}
\mathbb{E}_{\infty}[T_{p}] \geq e^{c}. \no
\end{eqnarray}
Denote $\bar{\mathbb{E}}_{t}(T)$ as Lorden's detection delay, i.e.,
\begin{eqnarray}
\bar{\mathbb{E}}_{t}(T) = \sup_{t\geq1} \esssup \mathbb{E}_{t}\left[(T-t+1)^{+}|Z_{1},\ldots,Z_{t-1}\right].\no
\end{eqnarray}
If $\forall \delta>0$, the condition
\begin{eqnarray}
\lim_{m \rightarrow \infty} \sup_{k \geq t \geq 1} \esssup P_{t}\left\{\frac{1}{m} \sum_{i=k}^{k+m} l(Z_{i}) \leq I_{1}-\delta \Bigg| Z_{1}, \ldots, Z_{k-1} \right\} \rightarrow 0 \no
\end{eqnarray}
holds for some constant $I_{1}$, and as $\alpha \rightarrow 0$, there exists some $m_{\alpha}$ which dependents only on $\alpha$ such that
\begin{eqnarray}
&&\sup_{k \geq 1} P_{\infty}(k \leq T_{p} \leq k+m_{\alpha}) \leq \alpha, \no
\end{eqnarray}
where
\begin{eqnarray}
\liminf \frac{m_{\alpha}}{|\log \alpha|} > I_{1}^{-1},\no
\end{eqnarray}
but,
\begin{eqnarray}
 \log m_{\alpha} = o(\log \alpha) \text{ as } \alpha \rightarrow 0. \no
\end{eqnarray}
Then,
\begin{eqnarray}
\bar{\mathbb{E}}_{t}(T) \leq (I_{1}^{-1} + o(1))c \quad \text{ as } c \rightarrow \infty. \no
\end{eqnarray}
\end{thm}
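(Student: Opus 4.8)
The statement bundles two logically separate claims: the unconditional false-alarm bound $\mathbb{E}_{\infty}[T_{p}]\ge e^{c}$, and the conditional delay bound $\bar{\mathbb{E}}_{t}(T)\le(I_{1}^{-1}+o(1))c$ that invokes the two large-deviation hypotheses. The plan is to prove them separately: the false-alarm bound from the martingale structure of the likelihood ratio under $P_{\infty}$, and the delay bound by dominating Page's statistic with a restarted one-sided SPRT and then exploiting the uniform lower-rate condition on the log-likelihood increments.

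For the false-alarm bound I would first note that, by the very definition of the conditional densities $f_{j}(\cdot\mid\mathbf{Z}_{1}^{k-1})$, the product $\prod_{i=q}^{k}L(Z_{i})$ is for each fixed $q$ a nonnegative $P_{\infty}$-martingale of unit mean, so by Doob's maximal inequality (equivalently Wald's likelihood-ratio identity) a single one-sided SPRT segment crosses the level $B=e^{c}$ under $P_{\infty}$ with probability at most $e^{-c}$. Decomposing $T_{p}$ into its successive restarted SPRT segments, each consuming at least one observation, a renewal/optional-stopping argument then bounds $\mathbb{E}_{\infty}[T_{p}]$ below by the reciprocal of the per-segment crossing probability, giving $\mathbb{E}_{\infty}[T_{p}]\ge e^{c}$. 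This is the non-i.i.d.\ analogue of Lorden's Theorem~2 in \cite{Lorden:AmS:71}.

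For the delay bound the plan is to use the representation $T_{p}=\inf_{t\ge1}(T_{s,t}+t-1)$ of \eqref{eq:SPRT_CUSUM}, where $T_{s,t}=\inf\{m\ge1:\sum_{i=t}^{t+m-1}l(Z_{i})\ge c\}$ is the one-sided SPRT launched at the change point. Since $T_{p}\le T_{s,t}+t-1$, we have $(T_{p}-t+1)^{+}\le T_{s,t}$ pointwise, whence $\bar{\mathbb{E}}_{t}(T)\le\sup_{t\ge1}\esssup\mathbb{E}_{t}[T_{s,t}\mid\mathbf{Z}_{1}^{t-1}]$ and the task reduces to bounding the expected SPRT crossing time uniformly. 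Fixing $\varepsilon>0$, setting $\delta=\varepsilon I_{1}/(1+\varepsilon)$ and $n=\lceil(1+\varepsilon)c/I_{1}\rceil$, the event $\{T_{s,t}>m\}$ forces $\sum_{i=t}^{t+m-1}l(Z_{i})<c$, and for $m\ge n$ this is contained in $\{\tfrac{1}{m}\sum_{i=t}^{t+m-1}l(Z_{i})<I_{1}-\delta\}$, whose conditional probability is small by the growth hypothesis. The $\sup_{k\ge t}$ form of that hypothesis, with conditioning on the full past $\mathbf{Z}_{1}^{k-1}$, is exactly what lets me restart the accumulation at any epoch and endow each block of observations with a uniform positive drift of at least $I_{1}-\delta$; a blocking argument with a long block length (so the per-block failure probability is small) then produces a geometric tail for $P_{t}(T_{s,t}>m\mid\mathbf{Z}_{1}^{t-1})$, whose sum I would control to obtain $\mathbb{E}_{t}[T_{s,t}\mid\mathbf{Z}_{1}^{t-1}]\le\frac{(1+\varepsilon)c}{I_{1}}(1+o(1))$; letting $\varepsilon\downarrow0$ gives the claimed leading term. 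The false-alarm-clustering hypothesis on $m_{\alpha}$ is used to guarantee that this crossing-time estimate applies uniformly over the worst-case $t$, so that conditioning on $\{T_{p}\ge t\}$ and on $\mathbf{Z}_{1}^{t-1}$ does not inflate the delay.

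The step I expect to be the main obstacle is precisely the passage from the qualitative growth condition to a uniform expected-crossing-time bound carrying the sharp constant $I_{1}^{-1}$. The hypothesis only says a normalized increment is unlikely to fall below $I_{1}-\delta$; converting this into summable (geometric) tails for $T_{s,t}$ that are uniform over all $t\ge1$ and essentially all pre-change histories, while ensuring the overshoot past $c$ contributes only to the $o(1)$ term, requires the careful blocking and maximal-inequality estimates of \cite{Lai:TIT:98}. Because the statement is quoted verbatim as Theorem~4 of \cite{Lai:TIT:98}, in the paper it is discharged by citation; the outline above is the self-contained argument that citation stands in for.
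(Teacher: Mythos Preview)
Your proposal is correct: the paper's own proof of this theorem consists solely of the line ``Please refer to \cite{Lai:TIT:98}'', exactly as you anticipated in your final paragraph. Your self-contained outline is a reasonable sketch of Lai's argument and already goes well beyond what the paper itself supplies.
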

\begin{proof}
Please refer to \cite{Lai:TIT:98}.
\end{proof}

By Proportion \ref{prop:laicondition2} and \ref{prop:laicondition3}, $(\tilde{\mu}^{*}, T_{p})$ is a strategy that satisfies the conditions in Theorem \ref{thm:Lai2}. Hence, if we choose $c = \log \gamma$ and $I_{1} = \tilde{p}I$ in the theorem, it is easy to verify that
$d(\tilde{\mu}^{*}, T_{p}) \leq ((\tilde{p}I)^{-1} + o(1))|\log \gamma|\text{ with } \mathbb{E}^{\tilde{\mu}^{*}}_{\infty}(T_{p}) \geq \gamma$. Therefore, $(\tilde{\mu}^{*}, T_{p})$ is asymptotically optimal for (P2).

In the rest of this appendix, we show the asymptotic optimality of $(\tilde{\mu}^*,T_p)$ for problem (P3).
\begin{lem}
\begin{eqnarray}
\sup_{t \geq 1} \mathbb{E}_{t}^{\tilde{\mu}^{*}}\left[ T_{p}-t | T_{p} \geq t \right] \sim \frac{1}{\tilde{p}} \frac{|\log \gamma|}{I}.
\end{eqnarray}
\end{lem}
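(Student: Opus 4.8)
The plan is to sandwich $\sup_{t\ge1}\mathbb{E}_t^{\tilde\mu^*}[T_p-t\mid T_p\ge t]$ between the lower bound already furnished by Lemma~\ref{lem:general_lowerbound} and an upper bound inherited from the Lorden analysis for (P2) carried out earlier in this appendix. The route deliberately avoids the SPRT-restarting argument of Appendix~\ref{apd:pollak_ext_asym}, because under $\tilde\mu^*$ the sequence $\{Z_k\}$ is no longer conditionally i.i.d.\ (it depends on the battery state $E_{k-1}$), so the step $\mathbb{E}_t^{\tilde\mu^*}[T_{s,t}]=\mathbb{E}_1^{\tilde\mu^*}[T_{s,1}]$ is not available.

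For the upper bound I would first record the elementary fact that the Pollak delay of Page's test never exceeds its Lorden delay. Since $T_p$ is a stopping time adapted to $\{\mathcal F_k\}$, we have $\{T_p\ge t\}=\{T_p\le t-1\}^c\in\mathcal F_{t-1}$, so conditioning on $\mathcal F_{t-1}$ and using that on $\{T_p\ge t\}$ one has $(T_p-t+1)^+=T_p-t+1$,
\begin{eqnarray}
\mathbb{E}_{t}^{\tilde\mu^*}\big[(T_p-t+1)^+\mathbf{1}_{\{T_p\ge t\}}\big]
=\mathbb{E}_{t}^{\tilde\mu^*}\big[\mathbf{1}_{\{T_p\ge t\}}\,\mathbb{E}_{t}^{\tilde\mu^*}[(T_p-t+1)^+|\mathcal F_{t-1}]\big]
\le P_{t}^{\tilde\mu^*}(T_p\ge t)\, d_t(\tilde\mu^*,T_p),\no
\end{eqnarray}
whence $\mathbb{E}_t^{\tilde\mu^*}[T_p-t\mid T_p\ge t]\le d_t(\tilde\mu^*,T_p)-1\le d(\tilde\mu^*,T_p)$ for every $t\ge1$. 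The (P2) part of this appendix has already established, with the threshold $B=\gamma$ (so that $\mathbb{E}_\infty^{\tilde\mu^*}(T_p)\ge\gamma$), that $d(\tilde\mu^*,T_p)\le((\tilde pI)^{-1}+o(1))|\log\gamma|$; hence $\sup_{t\ge1}\mathbb{E}_t^{\tilde\mu^*}[T_p-t\mid T_p\ge t]\le((\tilde pI)^{-1}+o(1))|\log\gamma|$.

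For the matching lower bound, note that with $B=\gamma$ the pair $(\tilde\mu^*,T_p)$ is feasible for the constraint $\mathbb{E}_\infty^{\mu}[T]\ge\gamma$, hence it is among the schemes over which the infimum in Lemma~\ref{lem:general_lowerbound} is taken; therefore $\sup_{t\ge1}\mathbb{E}_t^{\tilde\mu^*}[T_p-t\mid T_p\ge t]\ge \tilde p^{-1}I^{-1}|\log\gamma|\,(1+o(1))$. Combining the two bounds gives $\sup_{t\ge1}\mathbb{E}_t^{\tilde\mu^*}[T_p-t\mid T_p\ge t]\sim \tilde p^{-1}I^{-1}|\log\gamma|$, which is the claim.

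I do not anticipate a serious obstacle: the only point needing care is the Pollak\,$\le$\,Lorden comparison, i.e.\ verifying the $\mathcal F_{t-1}$-measurability of $\{T_p\ge t\}$ so that the conditioning step is legitimate — immediate for Page's test — and making sure the $o(1)$ terms from the two directions are compatible (both are uniform in $t$ since the upper bound passes through $\sup_t d_t = d(\tilde\mu^*,T_p)$ and the lower bound is already a statement about the supremum over $t$).
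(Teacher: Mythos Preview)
Your proposal is correct and takes a genuinely different route from the paper. The paper bounds $\mathbb{E}_t^{\tilde\mu^*}[T_p-t\mid T_p\ge t]$ by the one-sided SPRT delay $\mathbb{E}_t^{\tilde\mu^*}[T_{s,t}]-1$ (as in Appendix~\ref{apd:pollak_ext_asym}), then works around the non-i.i.d.\ structure of $\{Z_k\}$ under $\tilde\mu^*$ by arguing that conditioning on an empty battery $E_t=0$ gives the worst case, constructing an i.i.d.\ sequence of SPRT durations via battery-emptying restarts, and applying the strong law of large numbers. You instead invoke the elementary comparison $\sup_t\mathbb{E}_t^{\tilde\mu^*}[T_p-t\mid T_p\ge t]\le d(\tilde\mu^*,T_p)$ (the Pollak-delay is dominated by the Lorden-delay, which the paper itself uses in Lemmas~\ref{lem:ext_lowerbound} and~\ref{lem:general_lowerbound}) and simply quote the (P2) upper bound already obtained in this appendix via Theorem~\ref{thm:Lai2}. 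Your argument is shorter and avoids the battery-emptying construction entirely; the paper's approach, by contrast, yields a self-contained analysis of the Pollak delay that does not rely on the Lorden machinery (Propositions~\ref{prop:laicondition2}--\ref{prop:laicondition3} and Theorem~\ref{thm:Lai2}), at the cost of a somewhat ad hoc renewal argument. Both are valid; yours is the more economical route given what has already been proved.
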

\begin{proof}
Follow the similar argument in Appendx \ref{apd:pollak_ext_asym}, we have
\begin{eqnarray}
\hspace{-3mm} \mathbb{E}_{t}^{\tilde{\mu}^{*}}\left[ T_{p}-t | T_{p} \geq t \right] &\leq& \mathbb{E}_{t}^{\tilde{\mu}^{*}} \left[ T_{s, t}-1 | T_{p} \geq t \right] \no \\
&=& \mathbb{E}_{t}^{\tilde{\mu}^{*}} \left[ T_{s, t} \right]-1.
\end{eqnarray}
We claim that
$$ \mathbb{E}_{t}^{\tilde{\mu}^{*}} \left[ T_{s, t} | E_{t} = i \right] \leq \mathbb{E}_{t}^{\tilde{\mu}^{*}} \left[ T_{s, t} | E_{t} = 0 \right], \text{ for } i = 1, \ldots, C,$$
that is, at the change point $t$, if there is energy left in the battery, the average detection delay tends to be smaller than that of the case with an empty battery. 
Since $\mathbb{E}_{t}^{\tilde{\mu}^{*}} \left[ T_{s, t} \right] = \mathbb{E}_{t}^{\tilde{\mu}^{*}} \left[\mathbb{E}_{t}^{\tilde{\mu}^{*}} \left[ T_{s, t} | E_{t} \right]\right]$, we have
$$ \mathbb{E}_{t}^{\tilde{\mu}^{*}}\left[ T_{p}-t | T_{p} \geq t \right] \leq \mathbb{E}_{t}^{\tilde{\mu}^{*}} \left[ T_{s, t} | E_{t} = 0 \right] - 1. $$

Let $B=\gamma$, we have
$$ T_{s,t} = \inf \left\{ m \geq 1 \Bigg| \sum_{i=t}^{t+m} l(Z_{i}) \geq \log \gamma \right\}. $$

We define a sequence of stopping times $\{ T_{s,t}^{(1)}, \ldots, T_{s,t}^{(n)}, \ldots \}$ in the following manner:
\begin{enumerate}
\item Set $E_{t} = 0$. Define
$$T_{s,t}^{(1)} = \inf \left\{ m \geq 1 \Bigg| \sum_{i=t}^{t+m} l(Z_{i}) \geq \log \gamma \right\}.$$
\item Set $E_{T_{s,t}^{(n-1)}} = 0$. Define
$$T_{s,t}^{(n)} = \inf \left\{ m \geq 1 \Bigg| \sum_{i=T_{s,t}^{(n-1)}+1}^{T_{s,t}^{(n-1)}+m} l(Z_{i}) \geq \log \gamma \right\}.$$
\end{enumerate}
That is, at change point $t$, we discard all the energy left in the battery and then start a new SPRT under the power allocation $\tilde{\mu}^{*}$. When the previous SPRT stops, we empty the battery again, and start a new SPRT immediately. Then, this sequence of stopping time $\{T_{s, t}^{(1)}, \ldots, T_{s,t}^{(n)},\ldots,\}$ are independent with the same distribution of $T_{s,t}$ under $E_{t} = 0$. Therefore, by the strong LLN, for an $N$ that large enough, we have
\begin{eqnarray}
\frac{M}{N} = \frac{T_{t}^{(1)} + T_{t}^{(2)} + \dots + T_{t}^{(N)}}{N} \overset{a.s.}\rightarrow \mathbb{E}_{t}^{\tilde{\mu}^{*}}[T_{s,t}|E_{t}=0], \no
\end{eqnarray}
where $M = \sum_{i=1}^{N} T_{s, t}^{(i)}$. Since we have
\begin{eqnarray}
\sum_{i=t}^{t+M} l(Z_{i}) \geq N \log \gamma, \no
\end{eqnarray}
as $\gamma \rightarrow \infty$, $M \rightarrow \infty$, then
\begin{eqnarray}
\frac{1}{M} \sum_{i=t}^{t+M} l(Z_{i}) \geq \frac{N}{M} \log \gamma, \no
\end{eqnarray}
that is
\begin{eqnarray}
\tilde{p}{I} \geq \frac{N}{M} \log \gamma \text{  or  } \frac{M}{N} \geq \frac{|\log \gamma|}{\tilde{p}I}.  \no
\end{eqnarray}
If we ignore the overshoot, we will have
$$\mathbb{E}_{t}^{\tilde{\mu}^{*}}[T_{s,t}|E_{t}=0] \sim \frac{|\log \gamma|}{\tilde{p}I}. $$
Then, we have
\begin{eqnarray}
\mathbb{E}_{t}^{\tilde{\mu}^{*}}\left[ T_{p}-t | T_{p} \geq t \right] \leq \frac{|\log \gamma|}{\tilde{p}I}(1+o(1)). \no
\end{eqnarray}

\end{proof}

\bibliographystyle{ieeetr}{}
\bibliography{macros,detection,energyharvester}

\end{document}